\DeclareFontFamily{U}{cbgreek}{}
\DeclareFontShape{U}{cbgreek}{m}{n}{
        <-6>    grmn0500
        <6-7>   grmn0600
        <7-8>   grmn0700
        <8-9>   grmn0800
        <9-10>  grmn0900
        <10-12> grmn1000
        <12-17> grmn1200
        <17->   grmn1728
      }{}
\DeclareFontShape{U}{cbgreek}{bx}{n}{
        <-6>    grxn0500
        <6-7>   grxn0600
        <7-8>   grxn0700
        <8-9>   grxn0800
        <9-10>  grxn0900
        <10-12> grxn1000
        <12-17> grxn1200
        <17->   grxn1728
      }{}
\newcommand{\normalorbold}{%
  \ifnum\pdf@strcmp{\math@version}{bold}=\z@ bx\else m\fi
}
\newtheorem{theorem}{Theorem}
\newtheorem{athm}{Theorem}[section]
\newtheorem{obsm}[theorem]{Observation}
\newtheorem{defm}{Definition}
\newtheorem{obs}[athm]{Observation}
\newtheorem*{note}{Note}
\newtheorem{definition}{Definition}[section]
\newcommand*{\eins}{\ensuremath{\mathbbm 1}}
\def\gbm#1{{\let\phi\upphi \let\lambda\uplambda \let\mu\upmu \let\rho\uprho \let\sigma\upsigma \let\tau\uptau \let\theta\uptheta \let\eta\upeta \bm{#1}}}
\newcommand\scong{\mathrel{\substack{\sim\\[-.2ex]
                      \subseteq}}}
\newcommand\gt{\mathrel{\stackrel{\makebox[0pt]{\mbox{\normalfont\tiny{CVTO}}}}{\longmapsto}}}
\newcommand*{\bbR}{\mathbb{R}}
\newcommand*{\bbN}{\mathbb{N}}
\newcommand*{\cV}{\mathcal{V}}
\newcommand*{\bA}{\mathbf{A}}
\newcommand*{\bS}{\mathbf{S}}
\newcommand*{\ket}[1]{\left|#1\right\rangle}
\newcommand*{\Tr}{\mathrm{Tr}}
\newcommand*{\fr}[2]{\frac{#1}{#2}}
\newcommand{\vect}[1]{\mathbf{#1}}
\newcommand{\be}{\begin{equation}}
\newcommand{\ee}{\end{equation}}
\newcommand{\n}{\textendash}
\newcommand{\m}{\textemdash}
\begin{document}
\title{Thermodynamic resources in continuous\hyp variable quantum systems}
\date{\today}

\author{Varun Narasimhachar}
\email{nvarun@ntu.edu.sg}
\affiliation{
School of Physical and Mathematical Sciences, Nanyang Technological University, 637371 Singapore, Singapore
}
\affiliation{Complexity Institute, Nanyang Technological University, 637335 Singapore, Singapore}

\author{Syed Assad}
\affiliation{
School of Physical and Mathematical Sciences, Nanyang Technological University, 637371 Singapore, Singapore
}
\affiliation{Centre for Quantum Computation and Communication Technology, Department of Quantum Science, The Australian National University, Canberra ACT 2600, Australia}

\author{Felix C.\ Binder}
\affiliation{Institute for Quantum Optics and Quantum Information - IQOQI Vienna, Austrian Academy of Sciences, Boltzmanngasse 3, 1090 Vienna, Austria}

\author{Jayne Thompson}
\affiliation{Centre for Quantum Technologies, National University of Singapore, 3 Science Drive 2, 117543 Singapore, Singapore}

\author{Benjamin Yadin}
\affiliation{School of Mathematical Sciences, University of Nottingham, Nottingham NG7 2NR, United Kingdom}
\affiliation{Atomic and Laser Physics, Clarendon Laboratory, University of Oxford, Parks Road, Oxford, OX1 3PU, UK}

\author{Mile Gu}
\email{mgu@quantumcomplexity.org}
\affiliation{
School of Physical and Mathematical Sciences, Nanyang Technological University, 637371 Singapore, Singapore
}
\affiliation{Complexity Institute, Nanyang Technological University, 637335 Singapore, Singapore}
\affiliation{Centre for Quantum Technologies, National University of Singapore, 3 Science Drive 2, 117543 Singapore, Singapore}

\begin{abstract} Thermodynamic resources, beyond their well\hyp known usefulness in work extraction and other thermodynamic tasks, are often important also in tasks that are not evidently thermodynamic. Here we develop a framework for identifying such resources in diverse applications of bosonic continuous\hyp variable systems. Introducing the class of \emph{bosonic linear thermal operations} to model operationally\hyp feasible processes, we apply this model to identify uniquely quantum properties of bosonic states that refine classical notions of thermodynamic resourcefulness. Among these are (1) a suite of temperature\hyp like quantities generalizing the equilibrium temperature to quantum, non\hyp equilibrium scenarios; (2) signal\hyp to\hyp noise ratios quantifying a system's capacity to carry information in phase\hyp space displacement; and (3) well\hyp established non\hyp classicality measures quantifying the resolution in sensing and parameter estimation tasks.

\end{abstract}

\maketitle

\section{Introduction}
Continuous\hyp variable (CV) quantum systems play an integral role in both the historical development of thermodynamics \cite{Einstein07,Debye12} and the recent surge of quantum technologies\m from ultra\hyp large entangled clusters \cite{cluster1,cluster2} to cryptography and metrology \cite{ZMD06,AAU+09,ARL14,BFH16,KR17,BGBP17,KFIT17,LBBH18,Friis2018,PBGWL18,PAB+19}. Many of these applications rely on non\hyp classical states of CV systems\m for instance, squeezed states, which exhibit quantum fluctuations below the vacuum level in certain quadratures.

The resources underlying operational tasks often turn out either to be fundamentally thermodynamic, or else to have a distinctive thermodynamic aspect at the least. This has motivated many to examine the resource of non\hyp classicality from a thermodynamic perspective. This active research program has already shown that heat engines using squeezed thermal reservoirs perform beyond Carnot efficiency \cite{RAS+14,NGKK16,KFIT17,NMGKK18}. Yet, many questions naturally arise: How do we generalize standard concepts of thermal physics to such quantum states? Is it even meaningful to speak of temperatures in such general, non\hyp equilibrium settings?\m and so on. Such questions motivate us to develop a systematic characterization of the thermodynamic resources contained within bosonic CV systems in general quantum states. To this end, a resource\hyp theoretic treatment, which has catalyzed profound advances in understanding the thermodynamics of discrete\hyp variable quantum systems \cite{Janzing2000,Brandao2013,Gour2015,Ng2018}, could likewise stimulate further developments in CV applications by singling out inherently quantum thermodynamic resources.
\begin{figure}[t]
    \includegraphics[width=\columnwidth]{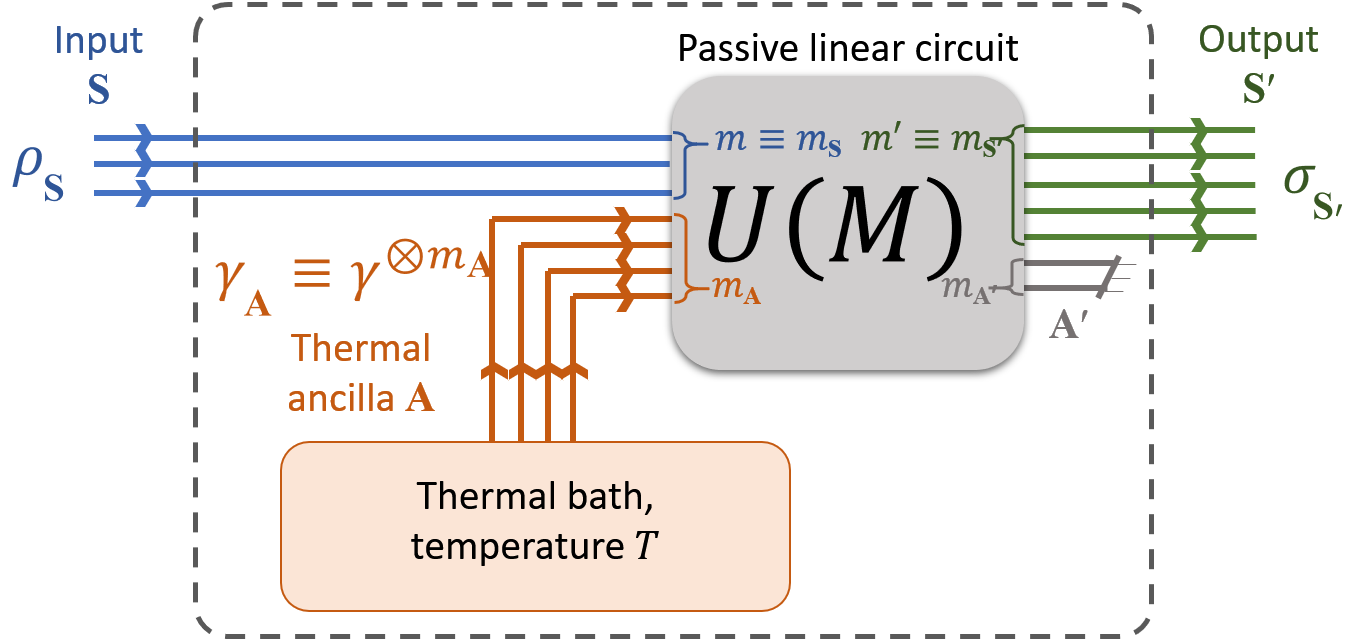}
    \caption{The transformation effected by the contents of the dashed box defines a generic \emph{bosonic linear thermal operation} (BLTO) from an $m$\hyp mode input system $\bS$ to an $m'$\hyp mode output system $\bS'$. The orthogonal symplectic transformation $M$ on the phase space of $\bS\bA$ induces the passive linear unitary $U(M)$ on the corresponding Hilbert space.}\label{figBLTO}
\end{figure}
%


Here, we draw inspiration from this approach and develop an operational framework of quantum thermodynamics for bosonic CV systems subject to Gaussian interactions. We start by defining \emph{bosonic linear thermal operations} (BLTO): the processes that can be enacted on such systems without requiring additional sources of free energy. An operational restriction to BLTO leads to several families of second law\n like statements. Firstly, we identify a spectrum of generalized temperatures for general bosonic states, all of which (1) are sensitive to inherently quantum features of the states; (2) align with the equilibrium notion of temperature for classical states; and, furthermore, (3) equilibrate towards the temperature of the ambient bath under BLTO. Secondly, we illustrate that many known indicators of operational performance and quantifiers of non\hyp classicality\m including phase\hyp space signal\hyp to\hyp noise ratios, squeezing of formation \cite{ILW16}, and phase\hyp space sensing resolution \cite{YBT+18,KTVJ19}\m are non\hyp increasing under BLTO. This thus establishes that many well\hyp known quantifiers of the state's resourcefulness for information\hyp processing and sensing tasks are in fact types of thermodynamic currency.

\section{Framework}\label{secF}
\noindent\textbf{Notation and preliminaries.} While bosonic CV quantum systems occur in many different physical media, it is useful to adopt the terminology of one medium for clarity. Here we will use the language of quantum optics, with the understanding that our results can be readily adapted to other bosonic systems. In this context, an elementary system is a \emph{bosonic mode}, whose local dynamics are governed by a harmonic oscillator Hamiltonian $H = \frac{1}{2} \hbar \omega \left(\hat{p}^2 + \hat{q}^2\right)$. Here $\omega$ denotes a characteristic frequency associated with the mode, and $\left(\hat q,\hat p\right)$ are a conjugate pair of dimensionless quadrature operators, satisfying the canonical commutation relation $\left[\hat q,\hat p\right]=i\hbar$.

In the case of an $m$\hyp mode system, we denote the quadrature operators by $\hat{\vect x}\equiv\left(\hat x_1,\hat x_2\dots,\hat x_{2m}\right)\equiv\left(\hat q_1,\hat p_1,\hat q_2,\hat p_2\dots,\hat q_m,\hat p_m\right)$. For a state whose density operator is $\rho$, we denote the associated first\hyp order quadrature moments $\left\langle\hat{\vect x}\right\rangle_\rho\equiv\left(\left\langle\hat x_1\right\rangle_\rho,\left\langle\hat x_2\right\rangle_\rho\dots,\left\langle\hat x_{2m}\right\rangle_\rho\right)$. The vector $\left\langle\hat{\vect x}\right\rangle_\rho$ lives in a $2m$\hyp dimensional \emph{phase space} $\cV$. The second\hyp order phase\hyp space moments are represented by the covariance matrix $V_\rho$ of $\rho$, defined by
\be
\left(V_\rho\right)_{j,k}:=\fr12\left\langle\left\{\hat x_j-\left\langle\hat x_j\right\rangle_\rho,\hat x_k-\left\langle\hat x_k\right\rangle_\rho\right\}\right\rangle_\rho,
\ee
where $\{\cdot,\cdot\}$ denotes the anti\hyp commutator. We make a choice of units with $\hbar=2$, whereby the covariance matrix of the vacuum state is the identity matrix. The uncertainty constraint on a state's covariance matrix reads $V_\rho+i\Omega_{\cV}\ge0$, where
\be
\Omega_\cV\equiv\Omega_{2m}=\bigoplus_{k=1}^m\left(\begin{array}{cc}0&1\\-1&0\end{array}\right)
\ee
is called the symplectic form on $m$ modes.

In the context of thermodynamics, thermal states of such systems play a central role. We denote by $\gamma$ the density operator of a single mode in the thermal state at ambient temperature. Assuming an ambient temperature $T$, the resulting thermal state is Gaussian with vanishing first moments, and quadrature fluctuations (second moments) given by
\be
 \langle \hat{q}^2\rangle = \langle \hat{p}^2\rangle=\eta:=\coth\left(\fr{\hbar\omega}{k_\mathrm BT}\right).
\ee
When $T=0$, the thermal state coincides with the vacuum state $\ket{0}$, whose uniform quadrature variance $\eta=1$ is called the \emph{vacuum shot noise}. The parameter $\eta$ thus increases monotonically with increasing temperature.\vspace{2ex}

\noindent\textbf{Resource theories.} The resource\hyp theoretic approach to thermodynamics has met with notable success in the past decade \cite{Brandao2013,Gour2015,Ng2018}, extending the second law, Landauer's principle, fluctuation relations, and other thermodynamic cornerstones to the quantum, non\hyp equilibrium regime. In the following, we will borrow some conceptual tools from this approach.

The core idea of a resource theory is to formalize a particular resource (e.g.\ entanglement) operationally in all its complexity, rather than through any single numerical quantifier (e.g.\ the entanglement of formation). This is done by choosing some subset of physical processes that an agent can implement without the ability to generate the given resource; these are formally defined to be the \emph{free operations} of the resource theory (e.g.\ local operations and classical communication, or LOCC). By extension, states that can be prepared by the free operations are called the theory's \emph{free states}. The theory then studies how the free operations may be used to perform useful tasks (possibly by consuming the resource). It also seeks to identify and quantify the resource through \emph{resource monotones}: state functions that are monotonically non\hyp increasing under the free operations.

Note also that the resource\n non\hyp generating property alone does not rigidly determine the class of free operations, leaving some room for variation in the theory. For example, another valid choice of free operations for the resource of entanglement would be separable operations; LOCC, though, are often preferred due to their relevance to operational considerations. We will now discuss the operational considerations that motivate our work.\vspace{2ex}


\noindent\textbf{Bosonic linear thermal operations.} In classical thermodynamics, the second law states that any object in contact with a thermal reservoir at a particular temperature $T$ will drift towards a thermal state of the same temperature\m a process during which it may be possible to extract work. Thus, the key resource is \emph{athermality}, i.e.\ deviation from thermal equilibrium. If a system is, say, at a temperature $T'>T$, the temperature deviation $T'-T$ functions as a natural monotone, allied with the free energy. Our goal here is to generalize these concepts to arbitrary bosonic systems. To this end, we first identify certain elementary processes that (1) do not draw on external free energy, and (2) are operationally inexpensive in state\hyp of\hyp the\hyp art applications.

Consider a system of bosonic modes, with access to an ambient bosonic heat bath at some fixed temperature $T$. As such, preparing auxiliary modes in thermal equilibrium states at this temperature neither requires nor creates free energy: these are the free states in the resource\hyp theoretic sense. No other states can be obtained in this manner\m this includes thermal states at temperatures differing from $T$, but also general, non\hyp thermal states.
%
%

Coupling the system with such auxiliary thermal modes through energy\hyp conserving interactions neither consumes external free energy nor creates any. From a practical standpoint, the interactions that are most feasible on bosonic systems are those that are quadratic in the quadrature operators\m so\hyp called Gaussian operations. Recent work, where arbitrary quadratic local and interaction Hamiltonians are considered, shows that energy\hyp conserving interactions under this constraint effectively decompose into independent processes involving bosonic passive linear interactions between modes of identical frequencies~\cite{SLL+19}. Such interactions correspond to circuits of beam\hyp splitters and phase\hyp shifters in optics. Hence, without loss of generality, we will restrict to interactions of this form between modes of a fixed frequency, which we denote $\omega$. Observing that the thermal noise level $\eta\equiv\coth\left(\hbar\omega/k_\mathrm BT\right)$ has a one\hyp to\hyp one correspondence with the temperature $T$ for fixed $\omega$, this also allows us to use $\eta$ as a placeholder for temperature for mathematical convenience.

In addition to the above operations, we can also freely remove some of the modes from an existing bosonic system. Combining these building blocks, we can formalize the class of free operations as follows:
%
%
%
%
%
\begin{defm}[Bosonic linear thermal operation [BLTO{]}]\label{defBLTO}
Denote the initial system by $\bS$, the number of its constituent modes by $m\equiv m_\bS$, and the thermal noise level corresponding to the ambient temperature by $\eta$. A \emph{bosonic linear thermal operation} (BLTO) is a process realizable through the following steps:
\begin{enumerate}
\item Adding an ancillary system $\bA$ consisting of an arbitrary number $m_\bA$ of elementary modes in uncorrelated thermal states $\gamma_{m_\bA}\equiv\gamma^{\otimes m_\bA}$ with covariance matrix $\eta\eins_{2m_\bA}$.
\item Application of any passive linear unitary on the composite $\bS\bA$.
\item Partial trace over a subsystem $\bA'$ comprising an arbitrary number $m_{\bA'}$ of modes, leaving an output system $\bS'$ of $m'\equiv m_{\bS'}=m+m_{\bA}-m_{\bA'}$ modes.
\end{enumerate}
\end{defm}
Note that the set $\left\{\gamma_k\equiv\gamma^{\otimes k}\right\}_{k\in\bbN}$ of thermal states at the ambient thermal level is closed under BLTO. It is also easy to see that if the modes of the initial system were prepared in thermal states at a level $\eta'\ne\eta$, we could use BLTO to transform them to thermal states intermediate between the two levels, but never to levels outside of this range. This can be interpreted as a semiclassical law of thermalization under BLTO, whereby a system's thermal gradient relative to its ambient bath can never be amplified. We now investigate how such laws can be extended to cases where the initial state of the system is not just a thermal state at some well\hyp defined temperature but, instead, a general quantum state.

\section{Thermodynamic laws under bosonic linear thermal operations}
We will now derive several laws governing the state transitions of modes subject to BLTO evolution. These laws in effect establish resource monotones (cf.\ Section~\ref{secF}) under BLTO. We present the laws in three categories: laws associated with temperature\hyp like quantities; laws concerning the thermal degradation of phase\hyp space displacement considered as a signal carrier; and laws of non\hyp classicality degradation.

\subsection{Thermalization of generalized temperatures}
In equilibrium thermodynamics, a system's temperature determines how it exchanges heat with other systems. In particular, interaction with a heat bath causes the system's temperature to approach that of the bath. This process of thermalization can also be viewed as the gradual dissipation of free energy\m whereby an initial temperature gradient between a system and its environment acts as a resource (commensurate with free energy) that inevitably dissipates, but allows the system to perform useful work in the process.

The degrees of freedom in general non\hyp equilibrium quantum systems, of course, far outnumber those in equilibrium and hence cannot be characterized by a single temperature. A squeezed thermal state, for example, has greater thermal variance (and thus apparent temperature) along one quadrature than another. More dramatically, a two\hyp mode squeezed state can look highly thermal locally on each individual mode, but may in fact have zero global entropy (corresponding to zero temperature). Beyond these, there exist numerous more exotic non\hyp Gaussian states, with no straightforward notion of temperature at all. Nevertheless, do analogous thermalization laws govern such general bosonic states, and if so, what are these laws?

To address this question, we first consider the covariance matrix of a general bosonic state. Recall first that the thermal state has covariance matrix $\eta\eins$, the fixed parameter $\eta$ corresponding to the bath's temperature. In the context of generalized temperatures, we will refer to the value $\eta$ as the \emph{thermal level}. We will consider a value $\eta'>\eta$ to be \emph{super\hyp thermal}, and a value $\eta'<\eta$ to be \emph{sub\hyp thermal}.

Our first class of generalized temperatures is based on the \emph{directional variances} of a state: for a state $\rho$ with covariance matrix $V_\rho$, the directional variance along some unit vector $\vect v$ in the phase space $\cV$ is given by $\vect v^TV_\rho\vect v$. This quantifies the variance in the measurement of a quadrature parallel to $\vect v$.
\begin{defm}[Principal directional temperatures]
For an $m$\hyp mode state $\rho$, we define its \emph{$k^\textnormal{th}$ principal directional temperature} (principal temperature for short) $\tau_k(\rho)$, for $k\in\{1,2\dots,2m\}$, as follows: $\tau_1(\rho)$ is the largest directional variance in the entire phase space; $\tau_2(\rho)$ is the largest directional variance in the subspace orthogonal to a direction associated with $\tau_1(\rho)$, and so on, with each subsequent value defined by maximizing over the subspace remaining after the preceding ones.
\end{defm}
A technical version of the definition is available in Supplementary Note \ref{prtheig}. The principal temperatures are in fact just the $2m$ eigenvalues of the covariance matrix $V_\rho$ of $\rho$, and therefore efficiently computable from $V_\rho$. Experimentally, they can be inferred from the statistics of quadrature measurements. Our first result (proof in Supplementary Note \ref{prtheig}) then states:
\begin{theorem}\label{theigm}
Under bosonic linear thermal operations (BLTO), each of the principal temperatures shifts closer to the thermal level $\eta$, never passing the latter. Specifically, if a BLTO maps $\rho\mapsto\sigma$, then
\begin{enumerate}
\item $\rho$ has no fewer super\hyp thermal principal temperatures than does $\sigma$;
\item $\rho$ has no fewer sub\hyp thermal principal temperatures than does $\sigma$;
\item When arranged in decreasing order, each of $\sigma$'s super\hyp thermal principal temperatures is no higher than the corresponding one of $\rho$;
\item When arranged in increasing order, each of $\sigma$'s sub\hyp thermal principal temperatures is no lower than the corresponding one of $\rho$.
\end{enumerate}
\end{theorem}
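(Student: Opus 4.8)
The plan is to pass entirely to the level of covariance matrices, where each of the three BLTO steps acts transparently, and then to reduce all four claims to a single application of the Cauchy eigenvalue\hyp interlacing theorem. First I would track the covariance matrix through Definition~\ref{defBLTO}. Step~1 sends $V_\rho$ to the block\hyp diagonal matrix $V_\rho\oplus\eta\eins_{2m_\bA}$, so the spectrum of the enlarged covariance matrix is the multiset of principal temperatures of $\rho$ together with the value $\eta$ repeated $2m_\bA$ times. Step~2 is the crucial simplification: a passive linear unitary is represented on phase space by a matrix $M$ that is simultaneously symplectic and orthogonal, so the congruence $V\mapsto MVM\tp$ is in fact a similarity transformation $V\mapsto MVM^{-1}$ and leaves the spectrum untouched. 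Step~3, the partial trace over $\bA'$, simply deletes the $2m_{\bA'}$ rows and columns indexed by the discarded modes, so $V_\sigma$ is a principal submatrix of the (spectrum\hyp preserving) post\hyp unitary covariance matrix.

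Next I would invoke the Cauchy interlacing theorem. Writing the combined input spectrum in decreasing order as $\lambda_1\ge\lambda_2\ge\dots\ge\lambda_{2n}$ with $n=m+m_\bA$, and the principal temperatures of $\sigma$ in decreasing order as $\tau_1(\sigma)\ge\dots\ge\tau_{2m'}(\sigma)$ with $2m'=2n-2m_{\bA'}$, interlacing for a principal submatrix obtained by deleting $2m_{\bA'}$ rows and columns gives
\be
\lambda_i\ge\tau_i(\sigma)\ge\lambda_{i+2m_{\bA'}}\qquad\text{for all valid } i.
\ee
The two inequalities here are exactly the two halves of the theorem: the left one controls the super\hyp thermal end, the right one the sub\hyp thermal end.

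The key observation tying this to the thermal level is that the $2m_\bA$ ancillary eigenvalues sit exactly at $\eta$, hence are neither super\hyp nor sub\hyp thermal. Consequently the super\hyp thermal (resp.\ sub\hyp thermal) entries of $\{\lambda_i\}$ are precisely the super\hyp thermal (resp.\ sub\hyp thermal) principal temperatures of $\rho$, occupying the top (resp.\ bottom) of the list and in the same order. For claims~1 and~3 I would argue from $\tau_i(\sigma)\le\lambda_i$: letting $s_\rho$ be the number of $\lambda_i>\eta$, every index $i>s_\rho$ has $\lambda_i\le\eta$, so $\tau_i(\sigma)\le\eta$ and $\sigma$ has at most $s_\rho$ super\hyp thermal temperatures (claim~1); and for each super\hyp thermal $\tau_i(\sigma)$ we have $\tau_i(\sigma)\le\lambda_i=\tau_i(\rho)$ (claim~3). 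Claims~2 and~4 follow symmetrically from the right inequality $\tau_i(\sigma)\ge\lambda_{i+2m_{\bA'}}$, most cleanly after re\hyp indexing both spectra in increasing order, where it becomes $\nu_j\ge\mu_j$ relating the $j$th smallest output and input eigenvalues.

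The routine computations---verifying the covariance update rules and the interlacing bounds---are standard, so I expect no real difficulty there. The main obstacle is conceptual rather than technical: recognizing that passivity is exactly what turns Step~2 into a spectrum\hyp preserving similarity, and that Step~3 is a principal\hyp submatrix restriction, so that the entire content of the theorem collapses onto interlacing. The only part demanding genuine care is the bookkeeping in the last paragraph---matching the ``super/sub\hyp thermal counts'' and the decreasing/increasing orderings of the two spectra to the correct side of the interlacing inequality, and confirming that the inserted $\eta$\hyp eigenvalues never disturb these counts.
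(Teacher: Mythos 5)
Your proof is correct and follows essentially the same route as the paper's: reduce the BLTO to the covariance\hyp matrix update $V_\sigma=\Pi_{\cV'}M\left(V_\rho\oplus\eta\eins\right)M^T\Pi_{\cV'}$, note that the orthogonal symplectic $M$ preserves the spectrum while the partial trace extracts a principal submatrix, and conclude by eigenvalue interlacing together with bookkeeping of the inserted $\eta$\hyp eigenvalues. The only cosmetic difference is that you invoke the two\hyp sided Cauchy interlacing theorem directly, whereas the paper obtains the equivalent pair of inequalities from the Courant--Fischer max\hyp min characterization applied to $\pm V$, a formulation chosen so that the same argument carries over to the mode temperatures of Theorem~\ref{thspeigm}.
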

\begin{figure}[t]
    \includegraphics[width=\columnwidth]{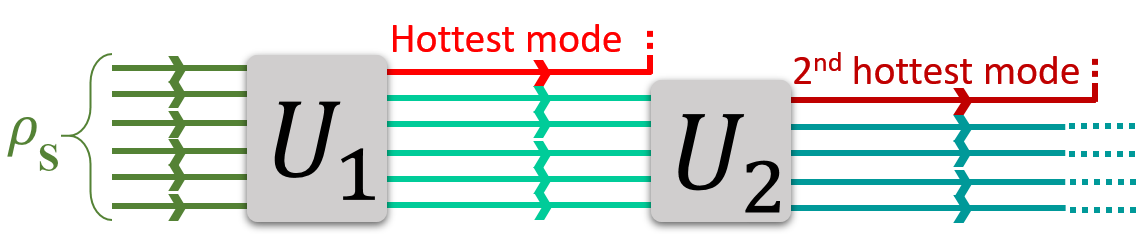}
    \caption{We define the principal mode temperatures as the most extreme effective temperatures in which individual modes can be sequentially isolated using global passive linear operations (labeled by $U_i$ in the illustration).}\label{figMT}
\end{figure}

Thus, each principal directional temperature exhibits behaviour analogous to standard thermalization: when it is super\hyp thermal, interactions with the thermal background will gradually cool it towards equilibrium; when it is sub\hyp thermal, these interactions will heat it towards equilibrium. Provided there is a temperature gradient in any one phase\hyp space direction, the bosonic system overall has some form of free energy\n like resource. Meanwhile, all directional variances of a thermal state are identically thermal (i.e., equal to $\eta$)\m as such, all its principal temperatures align with the conventional definition of temperature in equilibrium thermodynamics.

Given their correspondence to quadrature variances, one is tempted to interpret all principal directional temperatures as apparent temperatures when a bosonic mode is measured in particular phase\hyp space directions. Indeed, in some cases (e.g.\ when considering the temperatures of a squeezed thermal state) this intuition is valid; however, there are exceptions. Consider the case where two thermal modes at different temperatures are coupled through an even beamsplitter, and one of the outgoing modes is then squeezed. The resulting state's principal temperatures correspond to directions in phase space whose simultaneous interpretation as mode quadratures is forbidden by the uncertainty principle. Fortunately, we can define another family of temperature\hyp like measures using a process of ``localized heat distillation'' that does admit a direct physical meaning (technical definition in Supplementary Note \ref{prtheig}):
\begin{defm}[Principal mode temperatures]
For an $m$\hyp mode state $\rho$, we define its \emph{$k^\textnormal{th}$ principal mode temperature} $\mu_k(\rho)$, for $k\in\{1,2\dots,m\}$, as follows: $\mu_1(\rho)$ is the largest (arithmetic) mean principal temperature of a single mode that can be obtained from $\rho$ by closed\hyp system energy\hyp conserving operations; $\mu_2(\rho)$ is the largest single\hyp mode mean principal temperature obtainable from the remaining modes, and so on.
\end{defm}
Fig.~\ref{figMT} schematically illustrates the definition of the principal mode temperatures. This gives the mode temperatures a direct operational meaning as distillable temperatures: given a multi\hyp mode state, what is the hottest single mode that we can distill without drawing on external free energy?\m this defines the principal mode temperature $\mu_1$; once this mode is harnessed, what is the second\hyp hottest mode we can distill?\m call this $\mu_2$; etc. As do the principal directional temperatures, the principal mode temperatures also obey a thermalization law (proof in Supplementary Note \ref{prtheig}):
\begin{theorem}\label{thspeigm}
Under bosonic linear thermal operations (BLTO), each of the principal mode temperatures shifts closer to the thermal level $\eta$, never passing the latter.
\end{theorem}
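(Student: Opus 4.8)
The plan is to reduce the operational, greedy definition of the principal mode temperatures to a spectral quantity, and then to prove the thermalization law by a single application of the Cauchy interlacing theorem.

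First I would pass to the complex (photon) picture. Writing $a_k=\left(\hat q_k+i\hat p_k\right)/2$ for the annihilation operators, define the Hermitian, positive-semidefinite correlation matrix $N_\rho$ with entries $\left(N_\rho\right)_{kl}=\left\langle a_k^\dagger a_l\right\rangle_\rho$. A direct computation (in the $\hbar=2$ convention) shows that the mean principal temperature of a single mode equals $\tfrac12\Tr$ of its reduced covariance block, i.e.\ $2\left\langle a^\dagger a\right\rangle+1$, an affine function of that mode's mean photon number; in particular the thermal value $\eta$ corresponds to $\left\langle a^\dagger a\right\rangle=(\eta-1)/2$. Now a closed-system passive linear unitary (the operations appearing in Fig.~\ref{figMT}) corresponds to some $u\in U(m)$ and acts on the correlation matrix by conjugation, $N_\rho\mapsto uN_\rho u^\dagger$; crucially only $N_\rho$ — and not the anomalous moments $\left\langle a_k a_l\right\rangle$ — enters the photon number $\left\langle b_j^\dagger b_j\right\rangle=\left(uN_\rho u^\dagger\right)_{jj}$ of any output mode. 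Hence distilling the hottest single mode amounts to maximizing a diagonal entry of $uN_\rho u^\dagger$, whose value is $\lambda_1(N_\rho)$ by Rayleigh--Ritz, attained only when the relevant row of $u$ lies in the top eigenspace. The greedy step then leaves the remaining modes with correlation matrix equal to the compression of $N_\rho$ to the orthogonal complement of that eigenvector, whose spectrum is $\lambda_2\ge\dots\ge\lambda_m$; iterating yields the clean identification $\mu_k(\rho)=2\lambda_k(N_\rho)+1$, with the $\lambda_k$ the eigenvalues of $N_\rho$ in decreasing order.

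Next I would track how a BLTO acts on $N$. Adjoining $m_\bA$ uncorrelated thermal modes replaces $N_\rho$ by the block-diagonal $N_\rho\oplus\tfrac{\eta-1}{2}\eins_{m_\bA}$; the passive unitary on $\bS\bA$ conjugates this by some $u\in U(m+m_\bA)$; and the partial trace over $\bA'$ restricts the result to the surviving $m'$ modes, i.e.\ takes a principal submatrix. It is convenient to center at the thermal level by setting $\tilde N:=N-\tfrac{\eta-1}{2}\eins$, so that super-thermal and sub-thermal mode temperatures correspond exactly to positive and negative eigenvalues. Because the thermal ancilla sits precisely at the thermal level, its contribution becomes a zero block, and the whole BLTO acts on $\tilde N_\rho$ as: pad with $m_\bA$ zero eigenvalues, conjugate by a unitary (spectrum-preserving), and compress to an $m'\times m'$ principal submatrix. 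The output mode temperatures are therefore $2\beta_i+\eta$, where $\beta_1\ge\dots\ge\beta_{m'}$ are the eigenvalues of this Hermitian compression, while the input ones are $2\tilde\lambda_i+\eta$. The law then follows from Cauchy interlacing: with $n=m+m_\bA$ and spectrum $\{\alpha_i\}=\{\tilde\lambda_j\}\cup\{0^{(m_\bA)}\}$ before compression, one has $\alpha_i\ge\beta_i\ge\alpha_{i+n-m'}$. Since padding with zeros adds no positive eigenvalues, $\beta_i\le\alpha_i$ forces the number of super-thermal output eigenvalues to be no larger than that of the input and makes the $i$-th largest positive $\beta$ no larger than the $i$-th largest positive $\tilde\lambda$; applying the same bound to $-\tilde N$ gives the mirror statements for the sub-thermal eigenvalues. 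Translated back through $\mu_k=2(\cdot)+\eta$, these are exactly the counting-and-ordering statements asserting that each principal mode temperature moves toward $\eta$ without the super- or sub-thermal populations growing, which is Theorem~\ref{thspeigm}.

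I expect the genuine obstacle to be the \emph{first} step rather than the interlacing: one must verify rigorously that the greedy ``localized heat distillation'' really realizes the ordered eigenvalues of $N_\rho$. This needs (i) the affine reduction of single-mode mean principal temperature to mean photon number, (ii) the fact that only the number-correlation matrix $N$, not the anomalous moments, governs single-mode photon numbers under passive unitaries, and (iii) a deflation argument showing that the maximizer of the first step is forced into the top eigenspace, so the remaining modes inherit precisely the compression of $N_\rho$ with spectrum $\lambda_2,\dots,\lambda_m$ (the degenerate case requiring a brief separate check). Once $\mu_k(\rho)=2\lambda_k(N_\rho)+1$ is secured and the BLTO is recognized as zero-padding-plus-compression on the centered matrix $\tilde N$, Cauchy interlacing supplies the remainder essentially for free.
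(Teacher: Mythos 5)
Your proposal is correct and follows essentially the same route as the paper: the paper's technical definition of $\mu_k$ is a max--min over symplectic subspaces which it converts, via the same $\bbR^{2m}\to\bbC^{m}$ isomorphism you use, into the ordered eigenvalues of a Hermitian $m\times m$ matrix ($\tilde W=2N+\eins$ in your notation), and its monotonicity argument is exactly the variational (Courant--Fischer) derivation of the one-sided Cauchy interlacing bound that you invoke for the zero-padded and compressed matrix, followed by the same $\pm$ counting analysis. Two small caveats: define $N$ with the centered operators $a_k-\left\langle a_k\right\rangle$ so that the identity $2\left\langle a^\dagger a\right\rangle+1=\fr12\Tr V$ survives nonzero first moments, and note that the paper sidesteps what you call the genuine obstacle (greedy distillation $=$ ordered spectrum) by simply adopting the max--min formula as the official definition, so your deflation argument is a welcome supplement rather than a step the paper itself supplies.
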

\begin{figure}[t]
    \includegraphics[width=\columnwidth]{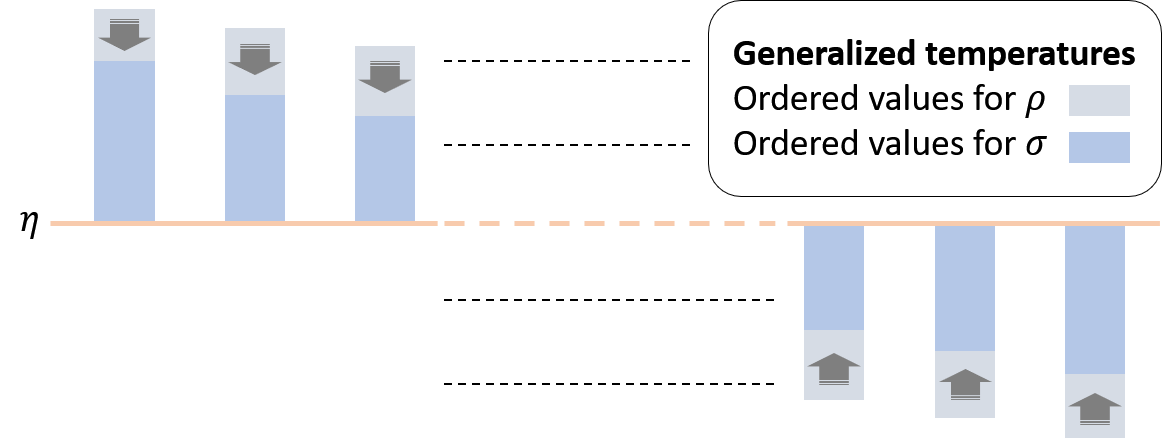}
    \caption{Theorems \ref{theigm} and \ref{thspeigm}: Under a BLTO mapping $\rho\mapsto\sigma$, the principal directional temperatures and mode temperatures of $\sigma$ are each respectively closer to the thermal value $\eta$ than are the corresponding values for $\rho$.}\label{figeig}
\end{figure}

Figure~\ref{figeig} provides a pictorial representation of this thermalization law. Together with their operational interpretation in terms of heat distillation, this law makes the principal mode temperatures a physically meaningful generalization of the equilibrium temperature.



Note that the principal mode temperatures are not the same as the symplectic eigenvalues: the latter correspond to the temperatures of thermal modes \emph{required in preparing} the state (more details below), rather than to any property of modes that can be \emph{extracted from} the state. The symplectic eigenvalues are subject to a somewhat weaker law under BLTO (proof in Supplementary Note \ref{prthsp}):
\begin{theorem}\label{thspm}
Under bosonic linear thermal operations (BLTO), the sub\hyp thermal symplectic eigenvalues cannot shift further away from the thermal level. Specifically, if a BLTO maps $\rho\mapsto\sigma$, then
\begin{enumerate}
\item $\rho$ has no fewer sub\hyp thermal symplectic eigenvalues than does $\sigma$;
\item When arranged in increasing order, each of $\sigma$'s sub\hyp thermal symplectic eigenvalues is no lower than the corresponding one of $\rho$.
\end{enumerate}
\end{theorem}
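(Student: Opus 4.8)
The plan is to track the whole BLTO at the level of covariance matrices and reduce the statement to a single monotonicity fact about a judiciously chosen Hermitian matrix. First I would record the action of a BLTO on covariance matrices: adjoining the ancilla sends $V_\rho\mapsto V_\rho\oplus\eta\eins_{2m_\bA}$; the passive linear unitary acts by $V\mapsto MVM\tp$ with $M$ symplectic, which leaves the symplectic spectrum invariant; and the partial trace over $\bA'$ simply retains the principal submatrix on the surviving modes. Writing $W:=M(V_\rho\oplus\eta\eins)M\tp$, the symplectic eigenvalues of $W$ are exactly those of $\rho$ together with $m_\bA$ extra copies of $\eta$, and $V_\sigma$ is the whole-mode principal submatrix of $W$ indexed by $\bS'$. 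Since the adjoined values all equal $\eta$, they are neither sub- nor super-thermal, so the sub-thermal part of the symplectic spectrum of $W$ coincides with that of $\rho$.

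The key device is a counting identity: for any threshold $t>0$, the number of symplectic eigenvalues of a covariance matrix $G$ lying strictly below $t$ equals $n_-(G+it\Omega)$, the number of negative eigenvalues of the Hermitian matrix $G+it\Omega$. I would prove this from Williamson's theorem, $G=SDS\tp$ with $S$ symplectic and $D=\bigoplus_j\nu_j\eins_2$: congruence by $S$ gives $G+it\Omega=S(D+it\Omega)S^\dagger$, so by Sylvester's law of inertia $n_-(G+it\Omega)=n_-(D+it\Omega)$, and each $2\times2$ block of $D+it\Omega$ has eigenvalues $\nu_j\pm t$, contributing one negative eigenvalue precisely when $\nu_j<t$ (its partner $\nu_j+t$ is always positive). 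Hence $n_-(G+it\Omega)$ counts exactly the symplectic eigenvalues below $t$.

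With this in hand the theorem collapses to Cauchy interlacing. Because $\Omega$ is block diagonal over modes, $V_\sigma+it\Omega$ is the principal submatrix of $W+it\Omega$ obtained by selecting the $\bS'$ coordinates. Compression of a Hermitian matrix to a principal submatrix cannot increase the number of negative eigenvalues (the $j$-th smallest eigenvalue of a principal submatrix is at least the $j$-th smallest of the parent), so $n_-(V_\sigma+it\Omega)\le n_-(W+it\Omega)$ for every $t>0$. Taking $t=\eta$ yields claim (1): $\sigma$ has no more sub-thermal symplectic eigenvalues than $W$, hence than $\rho$. Letting $t$ range over all values shows the counting function of $\sigma$ never exceeds that of $W$, which by the usual quantile correspondence forces the increasingly-ordered symplectic eigenvalues to satisfy $\nu_j(\sigma)\ge\nu_j(W)$; for the indices lying in the sub-thermal window $\nu_j(W)$ is precisely the $j$-th sub-thermal eigenvalue of $\rho$, giving claim (2).

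I expect the main subtlety to be the counting identity and, relatedly, seeing why only the sub-thermal side survives: the argument that $n_-(G+it\Omega)$ counts low symplectic eigenvalues hinges on the partner eigenvalue $\nu_j+t$ being unconditionally positive, which has no analogue for counting eigenvalues \emph{above} a threshold. This is exactly what makes the symplectic-eigenvalue law one-sided (sub-thermal only), in contrast to Theorems~\ref{theigm}--\ref{thspeigm}; the compression (partial-trace) step pushes symplectic eigenvalues up, so it controls the low end but not the high end. Some care is also needed to confirm that $\bS'$ is a union of whole modes, so that the restriction of $\Omega$ is again a symplectic form, but this is guaranteed because the partial trace removes entire modes.
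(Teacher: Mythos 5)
Your proof is correct, and its overall skeleton matches the paper's: write $W=M\left(V_\rho\oplus\eta\eins\right)M\tp$, use symplectic invariance of the symplectic spectrum under $M$ to identify the sub\hyp thermal symplectic eigenvalues of $W$ with those of $\rho$ (the adjoined $\eta$'s being exactly thermal), and then control the effect of the whole\hyp mode principal compression to $V_\sigma$. The one genuine difference is how that last step is justified. The paper simply cites the interlacing theorem of Bhatia and Jain \cite{BJ15}, which gives $\eta_j\left[V_\sigma\right]\ge\eta_j\left[W\right]$ for symplectic eigenvalues in increasing order, and reads off both claims. You instead re\hyp derive the needed half of that result from scratch via the counting identity $n_-\left(G+it\Omega\right)=\left|\left\{j:\nu_j(G)<t\right\}\right|$ (Williamson plus Sylvester's law of inertia, with the $2\times2$ blocks contributing eigenvalues $\nu_j\pm t$) combined with ordinary Cauchy interlacing for Hermitian principal submatrices. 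This buys a self\hyp contained, elementary proof and, as a bonus, a transparent explanation of why the law is one\hyp sided: the partner eigenvalue $\nu_j+t$ is unconditionally positive, so the inertia of $G+it\Omega$ can only count symplectic eigenvalues \emph{below} a threshold, which is precisely the asymmetry that distinguishes Theorem~\ref{thspm} from Theorems~\ref{theigm} and \ref{thspeigm}. Your quantile argument passing from $n_-\left(V_\sigma+it\Omega\right)\le n_-\left(W+it\Omega\right)$ for all $t$ to the ordered inequality $\nu_j(\sigma)\ge\nu_j(W)$ is sound, and your closing observation that the compression must respect whole modes (so that the restricted $\Omega$ is again a symplectic form) is exactly the point that makes the partial\hyp trace step legitimate.
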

It is well\hyp known (see, e.g., \cite{WPG+12}) that the symplectic eigenvalues quantify the temperatures of thermal states required in preparing a Gaussian state by Gaussian operations (cf.\ Fig.~\ref{figEBM}). The last theorem then tells us that the sub\hyp thermal symplectic eigenvalues directly quantify the amount of sub\hyp thermal temperature differential required in preparing the state under BLTO. The super\hyp thermal symplectic eigenvalues, on the other hand, are not monotones in that they may sometimes increase under BLTO, albeit not without the initial presence of squeezedness in the state.

\subsection{Signal deterioration laws}
Our next result is a straightforward observation about the first\hyp order phase\hyp space quadrature moments (proof in Supplementary Note \ref{prth1m}):
\begin{obsm}\label{th1m}
If a bosonic linear thermal operation (BLTO) achieves the transformation $\rho\mapsto\sigma$, then
\be
\sum_{k=1}^{2m'}\left|\left\langle\hat x_k\right\rangle_\sigma\right|^2\le\sum_{j=1}^{2m}\left|\left\langle\hat x_j\right\rangle_\rho\right|^2.
\ee
\end{obsm}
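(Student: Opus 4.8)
The plan is to track the single quantity $\sum_k\abs{\langle\hat x_k\rangle}^2=\nrm{\langle\hat{\vect x}\rangle}^2$, the squared Euclidean norm of the first-moment (mean-displacement) vector in phase space, through each of the three constituent steps of a BLTO, and to show that it never increases. I would decompose the BLTO as the composition of (i) appending thermal ancillas, (ii) applying a passive linear unitary, and (iii) partial tracing; the first two steps preserve the norm exactly, while the third can only reduce it.

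For step (i), the thermal ancilla state $\gamma_{m_\bA}$ has vanishing first moments, so the joint first-moment vector of $\rho\otimes\gamma_{m_\bA}$ is simply $\langle\hat{\vect x}\rangle_\rho$ padded with zeros in the ancilla coordinates; its norm is therefore unchanged. For step (ii), I would pass to the Heisenberg picture: a passive linear unitary $U(M)$ acts on the quadratures by $U(M)^\dagger\,\hat{\vect x}\,U(M)=M\hat{\vect x}$, so the first moments transform linearly as $\langle\hat{\vect x}\rangle\mapsto M\langle\hat{\vect x}\rangle$. The crucial input is that the real symplectic matrix $M$ representing a passive linear unitary is moreover \emph{orthogonal} (as noted in the caption of Fig.~\ref{figBLTO}): passivity means the interaction conserves total photon number, which forces $M$ to preserve the Euclidean structure of phase space. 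Orthogonality gives $\nrm{M\vect v}=\nrm{\vect v}$, so the norm is again preserved (and it is immaterial whether one uses $M$ or its inverse, since both are orthogonal).

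For step (iii), partial tracing over the modes of $\bA'$ simply discards the corresponding phase-space coordinates, so the output first-moment vector is a coordinate restriction of the post-unitary vector. Since this removes only the non-negative terms $\abs{\langle\hat x_k\rangle}^2$ associated with the traced-out modes, the squared norm can only decrease or stay equal. Chaining the three steps — equality, equality, then $\le$ — yields $\nrm{\langle\hat{\vect x}\rangle_\sigma}^2\le\nrm{\langle\hat{\vect x}\rangle_\rho}^2$, which is precisely the claimed inequality.

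The computation is essentially routine; the only point requiring care is the orthogonality of $M$ in step (ii). This is exactly the feature that distinguishes passive linear (number-conserving) interactions from general Gaussian ones: a generic symplectic transformation would rescale directional variances and could enlarge the first-moment norm (single-mode squeezing, for instance, amplifies displacement along one quadrature), whereas the restriction to orthogonal symplectics built into the definition of BLTO is what guarantees the monotone behaviour here. Everything else follows from the vanishing first moments of the free thermal states and the elementary fact that dropping coordinates cannot increase a Euclidean norm.
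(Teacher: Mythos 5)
Your proposal is correct and follows essentially the same route as the paper: the paper likewise uses the vanishing first moments of the thermal ancillas, the orthogonality of the symplectic matrix $M$ induced by the passive linear unitary to conserve the Euclidean norm of the first-moment vector, and the restriction of the index to the output modes to obtain the inequality. Your three-step decomposition merely spells out in more detail what the paper compresses into two lines.
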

\begin{figure}[t]
    \includegraphics[width=\columnwidth]{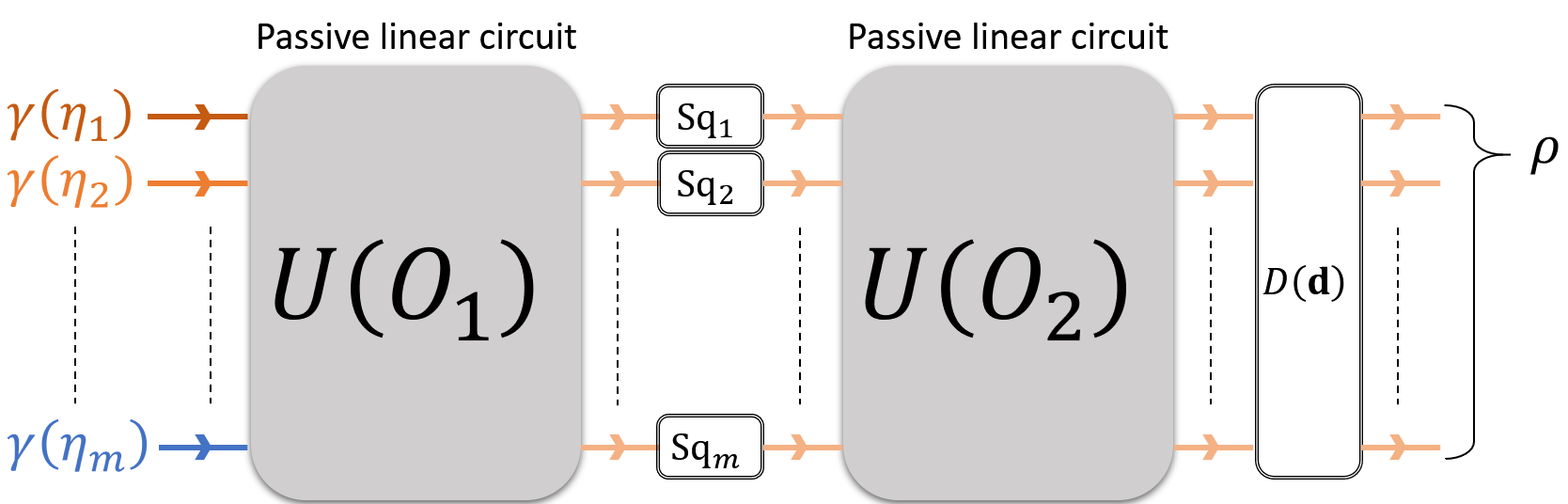}
    \caption{The Williamson and Euler (or Bloch\n Messiah) decompositions allow a generic $m$\hyp mode Gaussian state $\rho$ to be prepared systematically from $m$ uncorrelated single\hyp mode thermal states, represented here by $\gamma\left(\eta_k\right)$. Each Sq${}_k$ is a single\hyp mode squeezing operation, while the $U\left(O_j\right)$ are passive linear unitaries; $D(\vect d)$ is a phase\hyp space displacement operation.}\label{figEBM}
\end{figure}
Thus, if the phase\hyp space displacement in the state is used as a medium to carry information, then the maximum strength of a potential signal deteriorates under BLTO. However, recall Theorem~\ref{theigm}: the super\hyp thermal variances undergo a diminution under BLTO\m possibly counteracting the signal attenuation. Thus, we ask: can the noise reduction possibly compensate for the signal attenuation, resulting in an improvement of the signal\hyp to\hyp noise ratio (SNR)? In order to answer this question, we must formalize the SNR. For an $m$\hyp mode state $\rho$ with first moments $\left\langle\hat{\vect x}\right\rangle_\rho$, the first moment's component along the direction of an arbitrary unit vector $\vect v\in\bbR^{2m}$ in phase space is given by $\vect v^T\left\langle\hat{\vect x}\right\rangle_\rho$. The corresponding directional variance, in terms of the covariance matrix $V_\rho$, is $\vect v^TV_\rho\vect v$ (as discussed before); this can be interpreted as the square of the directional noise. Thus, we can define the \emph{directional SNR} as $\left|\vect v^T\left\langle\hat{\vect x}\right\rangle_\rho/\sqrt{\vect v^TV_\rho\vect v}\right|$. The optimal SNR of $\rho$ then is the maximum directional SNR over the entire phase space. In fact, as with the generalized temperatures, we define an entire family of SNR's (technical definition in Supplementary Note \ref{prsnr}):
\begin{defm}[Principal directional SNR's]\label{defsnr}
For an $m$\hyp mode state $\rho$, we define its \emph{$k^\textnormal{th}$ principal directional signal\hyp to\hyp noise ratio} $\textnormal{SNR}_k\left(\rho\right)$, for $k\in\{1,2\dots,2m\}$, as follows:
$\textnormal{SNR}_1\left(\rho\right)$ is the optimal directional SNR over the entire phase space; $\textnormal{SNR}_2\left(\rho\right)$ is the optimum over the subspace orthogonal to a direction achieving $\textnormal{SNR}_1\left(\rho\right)$, and so on.
\end{defm}
In the same spirit that the principal mode temperatures were defined, we define the following operationally\hyp motivated variants of the principal directional SNR's, restricting the directions to be simultaneously obtainable as quadratures in the phase space (technical definition in Supplementary Note \ref{prsnr}):
\begin{defm}[Principal mode SNR's]\label{defmsnr}
For an $m$\hyp mode state $\rho$, we define its \emph{$k^\textnormal{th}$ principal mode SNR} $\textnormal{MSNR}_k\left(\rho\right)$, for $k\in\{1,2\dots,m\}$, as follows: $\textnormal{MSNR}_1\left(\rho\right)$ is the largest directional SNR in a single mode that can be obtained from $\rho$ by closed\hyp system energy\hyp conserving operations; $\textnormal{MSNR}_2(\rho)$ is the largest directional SNR in a single mode obtainable from the remaining modes, and so on.
\end{defm}
Note that all the principal directional and mode SNR's of a thermal state are zero, by virtue of the first moments' being zero. In general, we have (proof in Supplementary Note \ref{prsnr}):
\begin{theorem}\label{thsnrm}
Under bosonic linear thermal operations (BLTO), the principal directional and mode SNR's can never increase. Specifically, if a BLTO maps $\rho\mapsto\sigma$ with an $m'$\hyp mode output, then
\begin{enumerate}
\item $\textnormal{SNR}_k\left(\sigma\right)\le\textnormal{SNR}_k\left(\rho\right)$ $\forall k\in\{1,2\dots,2m'\}$;
\item $\textnormal{MSNR}_k\left(\sigma\right)\le\textnormal{MSNR}_k\left(\rho\right)$ $\forall k\in\{1,2\dots,m'\}$.
\end{enumerate}
\end{theorem}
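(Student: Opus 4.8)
The plan is to track a BLTO purely at the level of the first moments $\vect d\equiv\langle\hat{\vect x}\rangle$ and the covariance matrix $V$, and then to isolate the single step where anything can go wrong. Writing the three stages of Definition~\ref{defBLTO} as ancilla adjunction $(\vect d,V)\mapsto(\vect d\oplus 0,\,V\oplus\eta\eins)$, a passive symplectic unitary $U(M)$ acting as the orthogonal congruence $\vect d\mapsto M\vect d$, $V\mapsto MVM\tp$, and a partial trace over a set of modes, I would first record the composite relations $\vect d_\sigma=N\vect d_\rho$ and $V_\sigma-\eta\eins=N\left(V_\rho-\eta\eins\right)N\tp$, where $N:=PM\iota_\bS$ is built from $M$ and the mode inclusion and projection $\iota_\bS,P$ and satisfies $NN\tp\le\eins$. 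Equivalently $V_\sigma=NV_\rho N\tp+\eta\left(\eins-NN\tp\right)$. These give at once the pointwise domination: for an output direction $\vect u$, putting $\vect w=N\tp\vect u$, the numerator obeys $\vect u\tp\vect d_\sigma=\vect w\tp\vect d_\rho$ while the denominator is $\vect w\tp V_\rho\vect w+\eta\left(\nrm{\vect u}^2-\nrm{\vect w}^2\right)$, and $\nrm{\vect w}\le\nrm{\vect u}$ makes the extra term nonnegative. Hence $\textnormal{SNR}_\sigma(\vect u)\le\textnormal{SNR}_\rho\!\left(N\tp\vect u\right)$, which on maximizing over $\vect u$ already settles the $k=1$ case of both claims and reproduces Observation~\ref{th1m} as a by-product.

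To reach the whole family I would route the ancilla entirely into the first stage and argue that the first two moves leave every principal SNR invariant. Ancilla adjunction does not change any principal SNR: the appended modes carry zero signal, so a nonzero component of a candidate direction along the ancilla only inflates the denominator; every optimizer may therefore be taken inside the original phase-space factor, and the new directions supply only zeros once that factor is exhausted. A passive symplectic unitary is an orthogonal change of coordinates, which preserves the directional SNR (since $\textnormal{SNR}_{U(M)\rho}(\vect u)=\textnormal{SNR}_\rho(M\tp\vect u)$) and preserves Euclidean orthogonality, so it merely relabels the optimization. After these two moves the principal SNRs are unchanged, and the entire content of the theorem concentrates in the final partial trace, i.e.\ in compressing $(\vect d,V)$ to the coordinate subspace $W$ of the retained modes, for which $\vect d_\sigma$ and $V_\sigma=PV_\rho P\tp$ are the restrictions.

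The crux, which I expect to be the main obstacle, is to show that this compression cannot raise any principal directional SNR. For the directional temperatures this is exactly the Cauchy interlacing of a principal submatrix that underlies Theorem~\ref{theigm}. Here, however, the objective is the generalized Rayleigh quotient $\left(\vect v\tp\vect d\right)^2/\left(\vect v\tp V\vect v\right)$ of the rank-one signal against the noise, and a naive Euclidean-orthogonal deflation of this quotient is \emph{not} controlled by a simple submatrix interlacing — one must instead interlace relative to the noise metric. I would therefore work with the technical form of the principal SNRs furnished in Supplementary Note~\ref{prsnr}, recasting them as a min-max of the quotient in the metric $V$ (well defined since $V+i\Omega\ge 0$ forces $V\succ 0$), and prove that this value interlaces under the compression to $W$, so that each principal SNR of $\sigma$ is dominated by the corresponding one of $\rho$. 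Folding in the pointwise bound of the first paragraph then upgrades the isometric-restriction estimate to the full BLTO and establishes claim~(1); getting the metric-dependent interlacing right is where the real work lies.

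Claim~(2), for the principal mode SNRs, I would obtain operationally, mirroring the proof of Theorem~\ref{thspeigm}. Any single mode distillable from $\sigma$ by closed-system energy-conserving operations is also distillable from $\rho$, because the distilling operation composed with the BLTO $\rho\mapsto\sigma$ is again a BLTO; together with the fact that adjoining zero-signal thermal modes can only dilute a single-mode SNR (so that bath-assisted distillation does not beat closed distillation, by the same domination as above), this embeds the feasible set defining $\textnormal{MSNR}_k(\sigma)$ into that defining $\textnormal{MSNR}_k(\rho)$, giving $\textnormal{MSNR}_k(\sigma)\le\textnormal{MSNR}_k(\rho)$. The iterated ``remaining modes'' clause is dispatched by applying the same composition stagewise, exactly as for the principal mode temperatures.
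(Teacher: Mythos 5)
Your proposal is correct and, at its core, follows the same route as the paper: the paper defines the principal SNRs variationally, as optimizations of the quotient $\left(\vect v\tp\left\langle\hat{\vect x}\right\rangle_\rho\right)^2/\vect v\tp V_\rho\vect v$ over nested (symplectic) subspaces, records the padding identity $\textbf{SNR}\left(\rho\otimes\gamma_\bA\right)=\textbf{SNR}\left(\rho\right)\oplus0_\bA$, and then notes that the theorem follows ``along the same lines'' as the proof of Theorem~\ref{theigm}, i.e.\ by pulling the output subspaces back through the orthogonal symplectic $M$ so that the feasible set for $\sigma$ embeds into that for $\rho\otimes\gamma_\bA$. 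The differences are in packaging and in your assessment of where the difficulty lies. Your contraction $N=PM\iota_\bS$ with $V_\sigma=NV_\rho N\tp+\eta\left(\eins-NN\tp\right)$ is a clean, self\hyp contained way to obtain the pointwise bound and recover Observation~\ref{th1m} as a by\hyp product; the paper keeps the ancilla explicit instead. However, the step you single out as ``where the real work lies''---interlacing of the generalized Rayleigh quotient under compression---requires no real work: only the one\hyp sided domination is needed, and in the max--min (Courant--Fischer) form it is immediate, because restricting the subspaces over which the outer maximum ranges (from all of $\cV\oplus\cV_\bA$ to the isometric image of $\cV'$) can only decrease the value; this is the same one\hyp line inequality as Eq.~\eqref{VsWr} in the temperature proof, which is precisely why the paper dispatches Theorem~\ref{thsnrm} in a sentence. (It also helps to notice that $R_\rho$ in Observation~\ref{Rval} is rank one, so all principal directional SNRs beyond the first vanish and claim~(1) is nontrivial only for $k=1$.) Finally, for the mode SNRs the paper again works with the min--max over nested symplectic subspaces, whereas your operational composition argument is the one place that is genuinely sketchy: the iterated ``remaining modes'' clause does not obviously compose across the two states under a greedy sequential definition, and making it rigorous essentially forces you back to the variational characterization (plus the MSNR padding identity), i.e.\ to the paper's argument.
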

Thus, the SNR in every principal component of the phase\hyp space displacement can only deteriorate under BLTO, showing that the signal attenuation effect always trumps any reduction in noise. It is important to note that this result is not of the ``data\hyp processing inequality'' type: that any \emph{specific} information contained in the initial state could only possibly be lost, would be true not only under BLTO but any processing. Rather, Theorem~\ref{thsnrm} is about the usefulness of the displacement degrees of freedom as a \emph{potential} information encoding medium\m \emph{if} these degrees of freedom were used to carry information, then their usefulness for this purpose would deteriorate under BLTO. In particular, if we relaxed BLTO by allowing displacement unitaries, Theorem~\ref{thsnrm} would no longer hold, while of course the data\hyp processing principle would still hold.

\begin{figure*}[t]
  \centering
  \subfloat{
    \includegraphics[width=.3\textwidth]{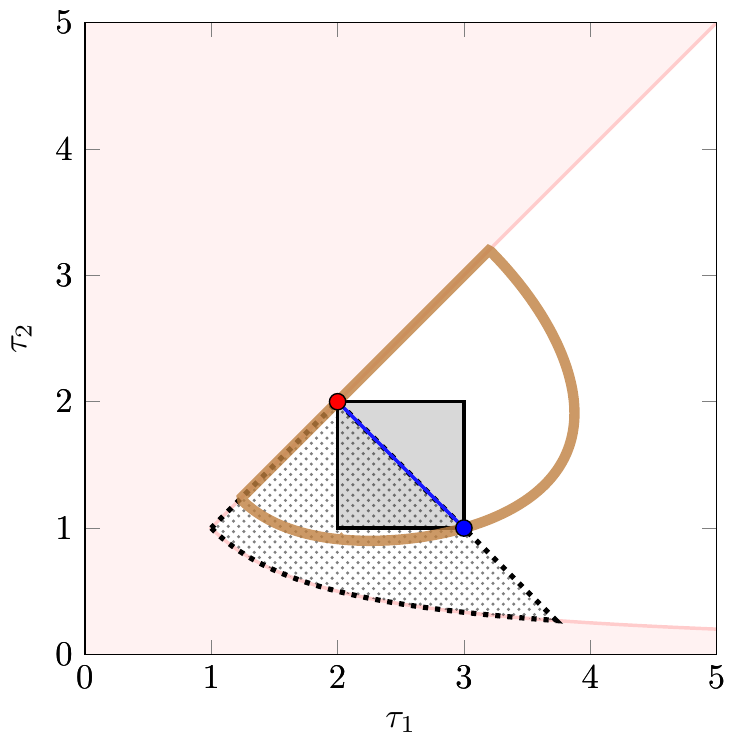}
    }
  \subfloat{
    \includegraphics[width=.3\textwidth]{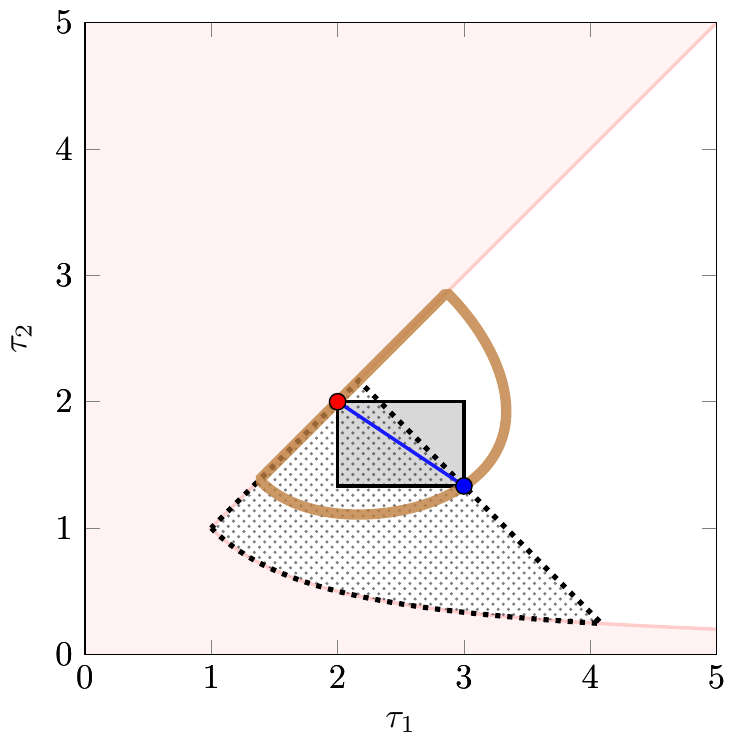}
    }
  \subfloat{
    \includegraphics[width=.3\textwidth]{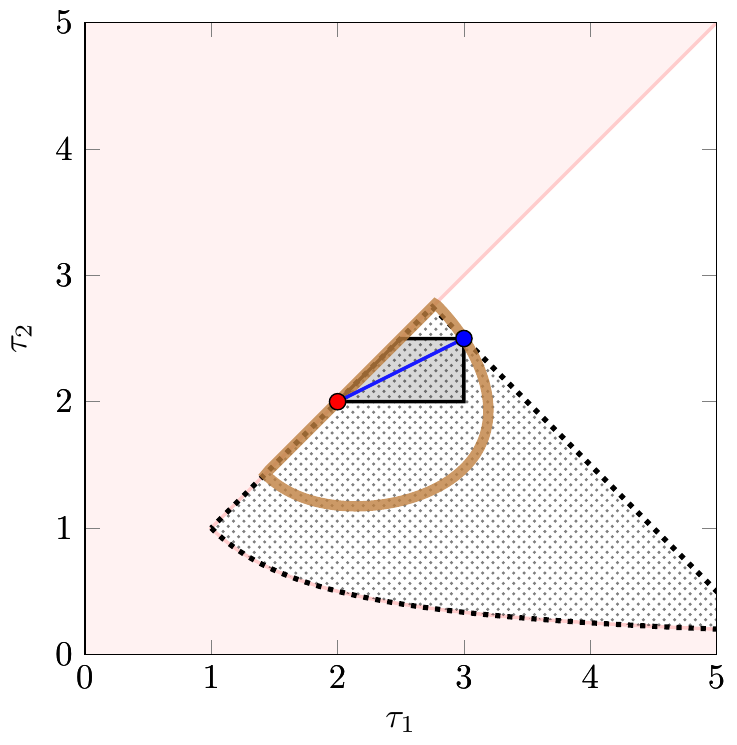}
    }\\
  \subfloat{
    \includegraphics[width=.3\textwidth]{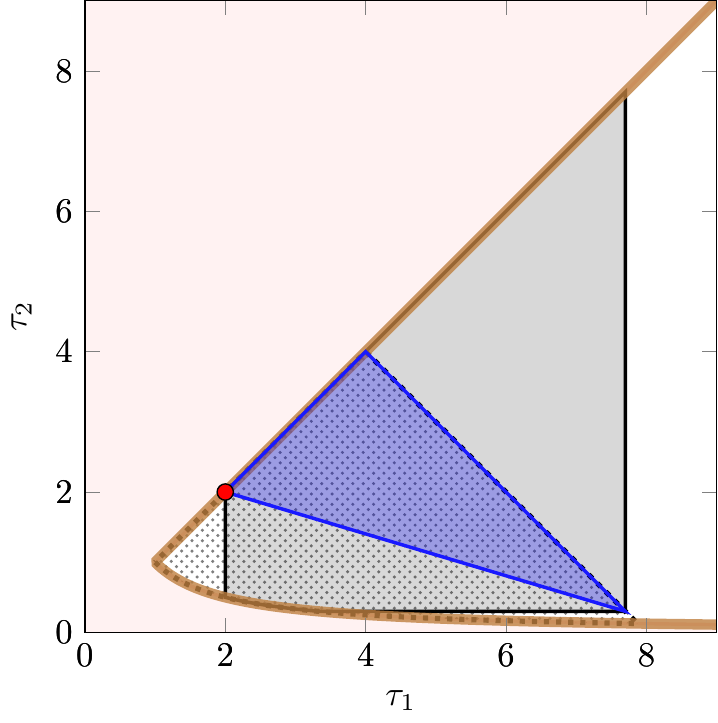}
    }
  \subfloat{
    \includegraphics[width=.3\textwidth]{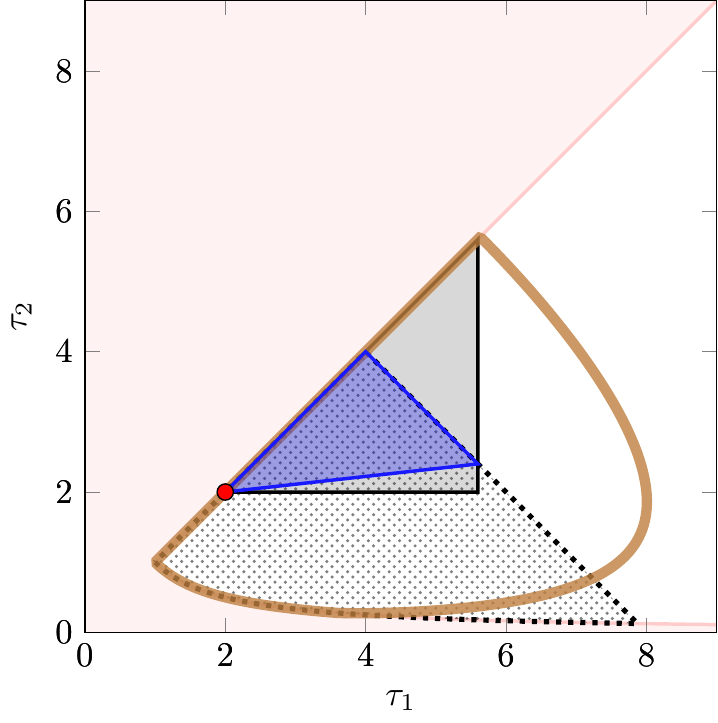}
    }
  \subfloat{
    \includegraphics[width=.3\textwidth]{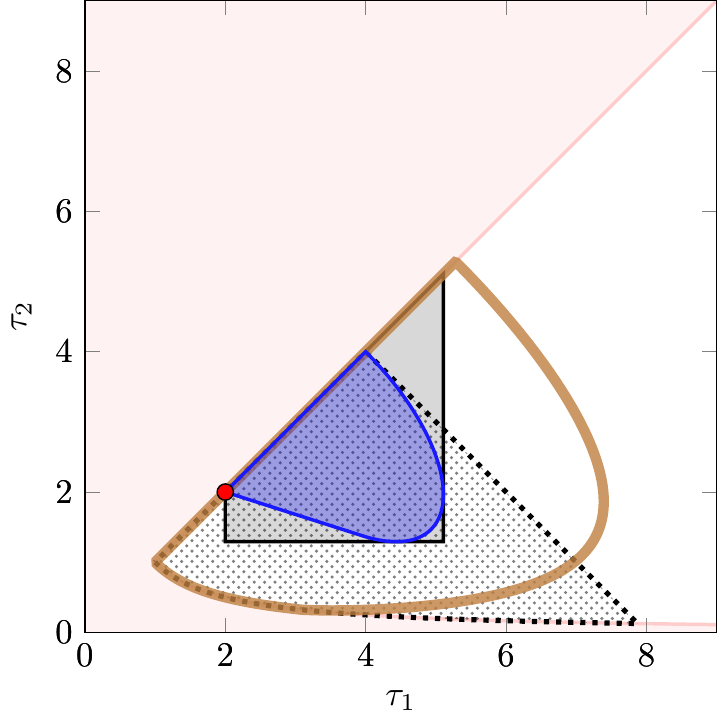}
    }
    \caption{Visualization of some of our thermodynamic laws\m Each example in the top half is associated with a single\hyp mode initial state (marked with a blue dot), while those in the bottom have two\hyp mode initial states (details in Supplementary Note \ref{appex}). The plot region contains potential single\hyp mode states reachable from the given initial state, with the X axis parametrizing $\tau_1$ (the first principal directional temperature), and the Y axis $\tau_2$, of these states. The thermal state is marked with a red dot. The outer pink region marks unphysical states that must therefore be ignored. The blue\hyp shaded region enclosed by solid blue lines depicts all the single\hyp mode states accessible from the given initial state, found using brute\hyp force numerical computation (feasible only in these simple cases): notice that this region shrinks to just a line for single\hyp mode initial states. The grey\hyp shaded region enclosed by solid black contains all final states consistent with Theorem~\ref{theigm}; the dotted region enclosed by the dashed black lines contains those consistent with Theorem~\ref{thspeigm}; finally, the region enclosed by the solid yellow lines contains final states allowed by the monotonicity of the generalized non\hyp equilibrium Helmholtz free energy (i.e.\ relative entropy with respect to the thermal state).}\label{figex1}
\end{figure*}

\subsection{Non\hyp classicality degradation and other inherited laws}
Some notable measures already defined in the literature, and known to have operational significance in other contexts, turn out to be BLTO monotones:
\begin{enumerate}
\item The recently\hyp developed resource theory for CV non\hyp classicality \cite{YBT+18} identified passive linear circuits with classical ancillary systems and measurement\n feed\hyp forward as the class of operations that cannot increase non\hyp classicality as manifested in the negativity of the Glauber\n Sudarshan $P$ function. Since BLTO fall within these operations, all non\hyp classicality measures found in \cite{YBT+18} are also BLTO monotones. These include convex roof extensions of phase\hyp subspace variances, as well as Fisher information\n based measures that quantify the usefulness of a state in the task of detecting phase\hyp space displacement operations. The stronger constraints in BLTO imply that similar Fisher information\hyp based results would hold in connection with the task of detecting a bosonic phase shift.
\item In any resource theory, the distance of a given state from the free states (under any contractive metric) is a monotone. Under BLTO, the thermal states are the only free states. Thus, we can construct numerous monotones of the form $D\left(\rho,\gamma\right)$, where $D$ is contractive. In particular, the ``relative entropy of athermality'', $S(\rho\|\gamma)$, has been identified as a direct analog of the classical Helmholtz free energy for discrete\hyp variable systems \cite{Sec}. This and all similar metric\hyp based measures naturally function as BLTO monotones, provided they have well\hyp defined values.
\item The \emph{squeezing of formation} \cite{ILW16} is defined as the aggregate of the single\hyp mode squeezing required for preparing a given state from unsqueezed resources. This is a BLTO monotone, since BLTO do not allow any squeezing operations or squeezed ancillary modes. Interestingly, it is known \cite{ILW16} that the squeezing of formation can in general be strictly (indeed, unboundedly) smaller than the squeezing resource determined by the canonical Euler (or Bloch\n Messiah) preparation of a Gaussian state (Fig.~\ref{figEBM}), which we may call the \emph{squeezing of unitary formation}. Since BLTO severely restrict the ancillary systems that can be used, it is plausible that the squeezing of unitary formation is also a BLTO monotone; this question remains open.
\end{enumerate}

\section{Illustrative examples}
We now present some illustrations of our results. First, Fig.~\ref{figex1} depicts the application of our results to the problem of determining which states are reachable under BLTO from a given initial state. To simplify the illustration, we consider only the second moments of all states and ignore their other features. The initial states associated with these examples are described in detail in Supplementary Note \ref{appex}. The examples in the top pertain to the case where a single\hyp mode initial state transforms to a single\hyp mode final state, while the ones in the bottom consider a two\hyp mode initial state transforming to a single\hyp mode final state through a BLTO that ends by discarding one mode. In these simple examples, we were able to numerically compute the set of all states reachable by BLTO from a given initial state, with which we then compared the sets of final states compatible with each of our laws. In more complex cases, we expect our results to serve as efficiently computable indicators of state transformation feasibility, which by itself would be computationally cumbersome to determine.

While these examples were chosen arbitrarily to represent a diverse range of cases, we shall now consider a practically relevant special case wherein the initial state is a squeezed thermal state of the same temperature as the bath. In order to motivate this example, consider the semiclassical regime. Here the system's state can deviate from equilibrium with the bath in only one way, namely as a thermal state at temperatures different from the bath's. On the other hand, modes in their full quantum description can contain a fundamentally quantum\hyp mechanical form of athermality: squeezing. Indeed, squeezed thermal states have been investigated as resources to power nano\hyp scale engines at efficiencies surpassing classical bounds \cite{NGKK16,KFIT17,NMGKK18}. By considering squeezed thermal states at the bath temperature, we can study this quantum thermodynamic resource in isolation.
\begin{figure*}[t]
  \centering
  \subfloat{
    \includegraphics[width=.45\textwidth]{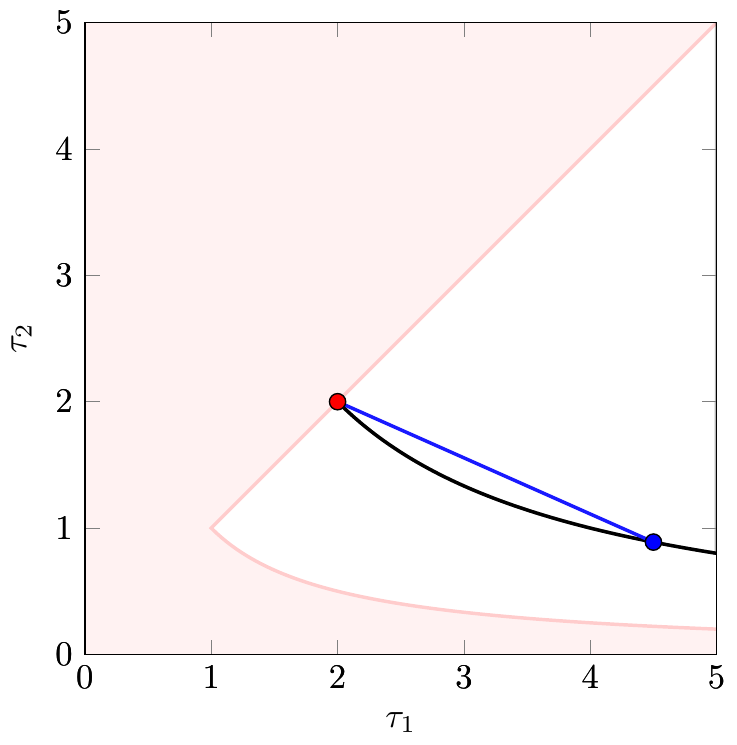}
    }
  \subfloat{
    \includegraphics[width=.45\textwidth]{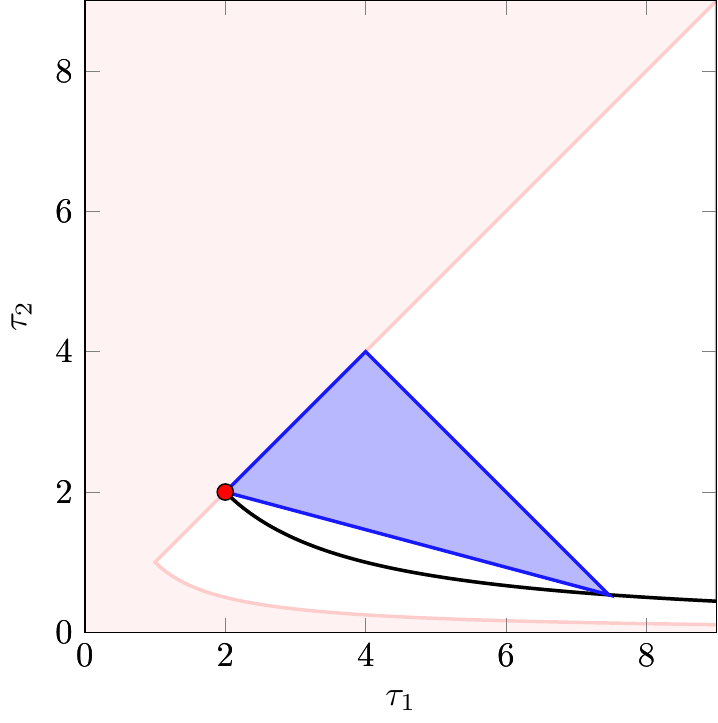}
    }
    \caption{The region enclosed by the solid blue line marks all single\hyp mode states accessible by BLTO starting from a single\hyp mode (left) and a two\hyp mode (right) squeezed thermal state at the same temperature as the bath's. The solid black line shows all squeezed thermal states at the bath temperature. The examples illustrate that genuinely quantum resource in the form of squeezing can be converted to a classical form of resource\m temperature differential relative to the bath.}\label{figex2}
\end{figure*}

Fig.~\ref{figex2} depicts some examples of this category. Evidently, the presence of squeezing in the initial state enables reaching states outside of the solid black set; this can be interpreted as the conversion of the quantum form of athermality, manifested by squeezing, to the classical form of a temperature differential relative to the bath. This interpretation is all the more vivid in the case of the two\hyp mode initial state, where the accessible region contains thermal states at a range of temperatures higher than the bath's\m a purely classical thermodynamic resource. In light of such examples, it is not surprising that squeezed thermal states can be used to overcome classical performance limitations in engines and other applications.

\section{Discussion}
We have built a framework for characterizing those features of a bosonic continuous\hyp variable (CV) quantum system that could constitute thermodynamic resources. In this we have imbibed some of the spirit of recent resource\hyp theoretic treatments of thermodynamics and other topics, framing our results in terms of measures of thermodynamic resourcefulness. Such measures, or resource monotones, are akin to the classical free energies in that they suffer depletion under thermodynamic processes, and also that they quantify a system's usefulness in tasks such as work extraction.

Modelling thermodynamic processes by operationally\hyp motivated bosonic linear thermal operations (BLTO), we have identified a rich spectrum of resource monotones, including quantities that behave like the equilibrium temperature but apply to quantum non\hyp equilibrium states. These quantities acquire immediate operational meaning in terms of phase\hyp space fluctuations, while the other resource monotones are directly related to existing measures of non\hyp classicality or figures of merit for operational tasks in metrology and communication. In applying our framework to two\hyp mode squeezed states, we illustrate that quantum notions of non\hyp classicality (squeezing, entanglement, etc.) can be directly converted to classical notions of free energy (temperature gradients), demonstrating that CV non\hyp classicality has definitive thermodynamic value.

It is worth noting that each of these monotones captures only some aspects of the complex thermodynamic constraints modelled by BLTO. Thus, while each of our laws is a necessary consequence of the constraints of BLTO, the converse is not true: it is possible for non\hyp BLTO processes to be consistent with these laws. Indeed, even considered collectively, our monotones would still only constitute a coarse\hyp grained description of the underlying physical system. This is by design\m thermodynamics has always been an operationalist science, even relative to physics in general. It places operational considerations at its core, using a coarse\hyp grained description of matter to reflect practically viable elements of measurement and control. What these elements are does not need to be set in stone: a thermodynamic framework is most useful when it reflects the operational elements most relevant to the purpose at hand. This can vary both across applications and over time with technological advancement. We have adhered to this operationalist spirit in basing our framework on Gaussian operations, which capture current technological capabilities in bosonic systems.

We hope that our methods can be adapted in the future to build useful thermodynamic theories appropriate to specific applications and reflecting relevant technological constraints. A more complete framework for specialized applications may incorporate non\hyp Gaussian operations, nonlinear interactions such as parametric down\hyp conversion, and hybrid discrete\n continuous system processes such as the Jaynes\n Cummings interaction. The question of what states are freely available opens up another avenue for extension. Indeed, the recently\hyp proposed resource theory of local activity applies to settings where thermal states at all temperatures are available \cite{SJHC19}.

An exciting future direction would be to further understand the operational consequences of our generalized temperatures. One particularly promising avenue is in sensing and metrology. Indeed, closely related notions of non\hyp classicality have already been found to capture the usefulness of states for sensing phase\hyp space displacement \cite{YBT+18,KTVJ19}, while BLTO operations naturally emerge when considering sensing under energetic constraints.

\section*{Acknowledgements}
The authors acknowledge helpful discussions with K.\ C.\ Tan, A.\ Ferraro, D.\ Jennings, P.\ K.\ Lam, C.\ Mukhopadhyay, R.\ Nair, R.\ Takagi, and Y.-D.\ Wu. This research is supported by the National Research Foundation (NRF), Singapore, under its NRFF Fellow programme (Award No.\ NRF-NRFF2016-02), the Lee Kuan Yew Endowment Fund (Postdoctoral Fellowship), Singapore Ministry of Education Tier 1 Grants No.\ MOE2017-T1-002-043, and No.\ FQXi-RFP-1809 from the Foundational Questions Institute and Fetzer Franklin Fund (a donor\hyp  advised fund of Silicon Valley Community Foundation). F.C.B.\ acknowledges funding from the European Union's Horizon 2020 research and innovation programme under the Marie Sk{\l}odowska\hyp Curie grant agreement No.\ 801110 and the Austrian Federal Ministry of Education, Science and Research (BMBWF). B.Y.\ acknowledges financial support from the European Research Council (ERC) under the Starting Grant GQCOP (Grant No.\ 637352). S.A.\ is supported by the Australian Research Council (ARC) under the Centre of Excellence for Quantum Computation and Communication Technology (CE110001027). Any opinions, findings and conclusions or recommendations expressed in this material are those of the author(s) and do not reflect the views of the National Research Foundation, Singapore.


\appendix

\section{Proofs}


\subsection{The form of a generic BLTO}
Towards proving our results, it will help to strip the definition (Def.~\ref{defBLTO}) of a BLTO down into its bare mathematical form using the symplectic geometry of the phase space. Considering the generic BLTO depicted in Fig.~\ref{figBLTO}, denote as before the $m$\hyp mode phase space of the input system $\bS$ by $\cV\cong\mathrm{Sp}\left(2m,\bbR\right)$; let $\cV'\equiv\cV_{\bS'}$ denote the phase space of the output system $\bS'$, and $\cV_\bA$, $\cV_{\bA'}$ those of the ancillary systems. Being a passive linear unitary, $U$ induces on the composite phase space $\cV\oplus\cV_\bA$ a symplectic transformation $M$ that is, besides, orthogonal by virtue of the passivity of $U$. Denoting the phase space quadrature operators of $\bS$ as $\left(\hat{x}_j\right)_{j\in\{1,2\dots,2m\}}\equiv\left(q_1,p_1,q_2,p_2\dots,q_m,p_m\right)$, those of $\bA$ as $\left(\hat{x}_j\right)_{j\in\left\{2m+1,2m+2\dots,2\left(m+m_\bA\right)\right\}}$, those of $\bS'$ as $\left(\hat{x}'_k\right)_{k\in\{1,2\dots,2m'\}}$, and those of $\bA'$ as $\left(\hat{x}'_k\right)_{k\in\left\{2m'+1,2m'+2\dots,2\left(m'+m_{\bA'}\right)\right\}}$ we have
\be
\hat x'_k=\sum_{j=1}^{2\left(m+m_\bA\right)}M_{kj}\hat x_j.
\ee
Noting that the phase\hyp space first moments of thermal modes are identically zero, the resulting transformation of the system first moments looks as follows:
\be
\left\langle\hat x'_k\right\rangle_\sigma=\sum_{j=1}^{2m}M_{kj}\left\langle\hat x_j\right\rangle_\rho.
\ee
Meanwhile, the second moments are encapsulated in the covariance matrix. In order to understand how the latter transforms, we note from the properties of the thermal state that
\be\label{SBLTO1}
V_\sigma=\Pi_{\cV'}M\left(V_\rho\oplus\eta\eins_\bA\right)M^T\Pi_{\cV'},
\ee
where $\Pi_{\cV'}$ is the projector onto the phase space $\cV'$ of $\bS'$. It will be useful for the upcoming proofs to note that the combined operator $\Pi_{\cV'}M$ effects a \emph{symplectic projection}.

\subsection{Proof of Observation~\ref{th1m}}\label{prth1m}
The orthogonality of $M$ implies the conservation of the euclidean norm in phase space:
\be
\sum_k\left|\left\langle\hat x_k\right\rangle_\sigma\right|^2=\sum_j\left|\left\langle\hat x_j\right\rangle_\rho\right|^2.
\ee
Restricting the index $k$ to the output system $\bS'$ immediately yields Observation~\ref{th1m}.\qed

\subsection{Proof of theorems \ref{theigm} and \ref{thspeigm}}\label{prtheig}
We first translate our definitions and theorems to mathematical language; to this end, we start by introducing some notation.
\begin{definition}[Eigenvalues]\label{defeig}
For a symmetric matrix $V$ acting on a (finite $m$)\hyp dimensional real vector space $\cV$, the \emph{$k^{\textnormal{th}}$ largest eigenvalue} of $V$, for $k\in\{1,2\dots,m\}$, is given by
\begin{equation}
    \lambda_k\left[V\right]:=\max_{\cV_k\subseteq\cV}\min_{\vect v\in\cV_k\setminus0}\fr{\vect v^TV\vect v}{\vect v^T\vect v},
\ee
where $\cV_k$ varies over all $k$\hyp dimensional subspaces of $\cV$.
\end{definition}
\begin{definition}\label{defspeig}
For a symmetric $V$ acting on a real, (finite $2m$)\hyp dimensional symplectic vector space $\left(\cV,\Omega_\cV\right)$, define for $k\in\{1,2\dots,m\}$
\begin{equation}\label{speig}
    \nu_k\left[V\right]:=\fr12\max_{\cV_{2k}\scong\cV}\min_{\cV_2\scong\cV_{2k}}\Tr\left[\Pi_{\cV_2}V\right],
\ee
where $\cV_{2k}$ varies over all $2k$\hyp dimensional \emph{symplectic subspaces} of $\cV$, and $\cV_2$ over all 2\hyp dimensional symplectic subspaces of each $\cV_{2k}$.
\end{definition}
Note that the $\nu_k$ are \emph{not} the symplectic eigenvalues of $V$. However, they can be expressed as the eigenvalues of an operator, following the line of argument used in Ref. \cite{YBT+18}, Appendix D:
\begin{obs}\label{nueig}
For any given $V$, define $W:=\fr12\left(V+\Omega V\Omega^T\right)$. Then,
\be
\nu_k\left[V\right]=\lambda_{2k}\left[W\right].
\ee
\end{obs}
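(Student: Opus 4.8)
The plan is to turn the symplectic max--min defining $\nu_k$ into the ordinary Courant--Fischer characterization of the eigenvalues of $W$ (Definition~\ref{defeig}), exploiting that $W$ has been built so as to commute with the symplectic form.

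First I would record the structural fact that does all the work. Using $\Omega^2=-\eins$ and $\Omega\Omega^T=\eins$, a one-line computation gives $\Omega W\Omega^T=\frac12\left(\Omega V\Omega^T+\Omega^2 V(\Omega^T)^2\right)=\frac12\left(\Omega V\Omega^T+V\right)=W$, so $\Omega W=W\Omega$. Since $W$ is symmetric and commutes with the orthogonal $\Omega$ (which squares to $-\eins$ and hence has no real eigenvectors), each eigenspace of $W$ is $\Omega$-invariant and therefore even-dimensional; consequently every eigenvalue of $W$ has even multiplicity, the spectrum splits into identical pairs, and $\lambda_{2k-1}[W]=\lambda_{2k}[W]$ for all $k$. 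In particular the span of the top $k$ eigen-pairs is an $\Omega$-invariant $2k$-dimensional subspace on which $W\ge\lambda_{2k}[W]\,\eins$.

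Next I would evaluate the summand on the two-dimensional symplectic subspaces of Definition~\ref{defspeig}. These are the genuine single modes, spanned by orthonormal conjugate pairs and hence $\Omega$-invariant; writing such a subspace as $\cV_2=\mathrm{span}\{f,\Omega f\}$ with $|f|=1$ (note $f\perp\Omega f$ since $f^T\Omega f=0$), and using $\Omega^T V\Omega=\Omega V\Omega^T$, I get $\frac12\Tr[\Pi_{\cV_2}V]=\frac12\,f^T\!\left(V+\Omega^T V\Omega\right)f=f^T W f$. The identical computation with $W$ in place of $V$ gives $\frac12\Tr[\Pi_{\cV_2}W]=f^TWf$ as well (because $\Omega^T W\Omega=W$), so the functional sees only $W$ and the anti-part $V-W$ is invisible. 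Since an $\Omega$-invariant $\cV_{2k}$ contains $\mathrm{span}\{f,\Omega f\}$ exactly when it contains $f$, the inner minimization collapses to $\min_{0\neq f\in\cV_{2k}}f^T W f/|f|^2=\lambda_{\min}\!\left(W|_{\cV_{2k}}\right)$, and the definition becomes $\nu_k[V]=\max_{\cV_{2k}}\lambda_{\min}\!\left(W|_{\cV_{2k}}\right)$, the maximum ranging over $\Omega$-invariant $2k$-dimensional subspaces.

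Finally I would compare this against $\lambda_{2k}[W]=\max_{\dim S=2k}\min_{0\neq f\in S}f^TWf/|f|^2$. The inequality $\nu_k[V]\le\lambda_{2k}[W]$ is automatic, because the admissible $\cV_{2k}$ are a subfamily of all $2k$-dimensional subspaces. For the reverse inequality I would feed in the $\Omega$-invariant subspace from the first step---the top $k$ eigen-pairs of $W$---on which $\lambda_{\min}=\lambda_{2k}[W]$, so the restricted maximum already attains the unrestricted optimum, giving $\nu_k[V]=\lambda_{2k}[W]$. The delicate point, and the step I would guard most carefully, is this last one: restricting the maximization to $\Omega$-invariant subspaces can only lower a max--min, so equality is not free---it holds precisely because the spectral pairing established in the first step makes the extremal $2k$-dimensional eigenspace of $W$ automatically $\Omega$-invariant, hence admissible. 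The reduction of $\Tr[\Pi_{\cV_2}V]$ to $f^TWf$ in the third paragraph is the other place to be careful, since it is exactly where the hypothesis that $\cV_2$ is a genuine ($\Omega$-invariant) mode, rather than an arbitrary nondegenerate symplectic plane, is used.
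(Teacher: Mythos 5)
Your proof is correct, and its first half is the same key computation as the paper's: reducing $\tfrac12\Tr\left[\Pi_{\cV_2}V\right]$ to $\vect f^T W\vect f$ for a mode $\cV_2=\mathrm{span}\{\vect f,\Omega \vect f\}$, so that the symplectic max--min sees only $W$. Where you diverge is in how you then identify that max--min with $\lambda_{2k}[W]$. The paper passes to the complex picture: it exploits the $2\times2$ block structure of $W$ to build a Hermitian $\tilde W\in\bbC^{m\times m}$, notes that symplectic subspaces of $\cV$ correspond to complex subspaces of $\bbC^m$, reads off $\nu_k[V]=\lambda_k[\tilde W]$ from Courant--Fischer there, and recovers the double degeneracy by inverting the isomorphism. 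You stay in the real picture: the commutation $\Omega W=W\Omega$ (which is exactly the statement that $W$ is complex-linear for the complex structure $\Omega$, i.e.\ the same fact the paper's isomorphism encodes) gives even multiplicities and $\Omega$-invariant eigenspaces, and you then run an explicit two-sided Courant--Fischer comparison. Your version is more self-contained and has the virtue of making explicit a point the paper leaves implicit: that restricting the outer maximization to symplectic ($\Omega$-invariant) subspaces loses nothing only because the extremal $2k$-dimensional spectral subspace of $W$ is itself $\Omega$-invariant, hence admissible. One cosmetic caveat: when an eigenvalue of $W$ has multiplicity greater than two, ``the span of the top $k$ eigen-pairs'' requires selecting an $\Omega$-invariant subspace of the right even dimension inside the relevant eigenspace; this is always possible since $\Omega$ restricts to a complex structure there, but it deserves a sentence.
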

\begin{proof}
First, note that
\be
\Tr\left[\Pi_{\cV_2}V\right]=\vect q^TV\vect q+\vect p^TV\vect p,
\ee
where $\vect q$ is an arbitrary unit vector in $\cV_2$ and $\vect p=\Omega_{\cV_2}^T\vect q$ is the quadrature conjugate to $\vect q$. Thus,
\be
\Tr\left[\Pi_{\cV_2}V\right]=\vect q^T\left(V+\Omega V\Omega^T\right)\vect q=2\vect q^TW\vect q.
\ee

$W$ has a special structure in terms of $2 \times 2$ blocks:
\begin{equation}
	W = \begin{pmatrix}
		W^{1,1} & W^{1,2} & \dots \\
		W^{2,1} & W^{2,2} & \dots \\
		\vdots & \vdots & \ddots
		\end{pmatrix}, \quad
		W^{i,j} = \begin{pmatrix}
		W^{i,j}_R & -W^{i,j}_I \\
		W^{i,j}_I & W^{i,j}_R
	\end{pmatrix},
\end{equation}
with the diagonal blocks satisfying $W^{i,i}_I = 0$. This makes the expression for $\nu_k\left[V\right]$ amenable to an isomorphism \cite{dG13} onto a complex vector space of half the dimension: We form $\tilde W \in \mathbb{C}^{m\times m}$ with elements $\tilde{W}_{ij} := W^{i,j}_R + i W^{i,j}_I$, and similarly a vector $\vect r = (r_{1,x},r_{1,p},r_{2,x},r_{2,p},\dots) \in \cV$ is mapped to $\tilde{\vect r} = (r_{1,x} + i r_{1,p}, r_{2,x} + i r_{2,p},\dots) \in\tilde\cV\cong \mathbb{C}^m$. Then $\tilde{\vect r}^\dagger \tilde{W} \tilde{\vect r} = \vect{r}^T W \vect{r}$; in addition, an orthogonal basis in $\mathbb{C}^m$ corresponds to a symplectic basis in $\cV$. Therefore,
\begin{align}
	 \nu_k\left[V\right]&=\max_{\cV_{2k}\scong\cV}\min_{\vect q\in\cV_{2k}}\vect q^TW\vect q\nonumber\\
	 &=\max_{\tilde\cV_{k}\subseteq\cV}\min_{\tilde{\vect q}\in\tilde\cV_{k}}\tilde{\vect q}^T\tilde W\tilde{\vect q}\nonumber\\
	 &=\lambda_k\left[\tilde W\right].
\end{align}
That these are the doubly degenerate eigenvalues of $W$ is seen by inverting the isomorphism to map from the diagonalized form of $\tilde W$ back to the real $2m$\hyp dimensional matrix $\mathrm{diag}\left(\lambda_1\left[\tilde W\right],\lambda_1\left[\tilde W\right],\lambda_2\left[\tilde W\right],\lambda_2\left[\tilde W\right]\dots\right)$.
\end{proof}
\begin{obs}\label{eigpm}
$\lambda_j[W]\ge\lambda_k[W]$ and $\nu_j[W]\ge\nu_k[W]$ whenever $j<k$.
\end{obs}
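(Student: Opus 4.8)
The plan is to recognize Observation \ref{eigpm} as nothing more than the statement that both indexed families $\lambda_k[\,\cdot\,]$ and $\nu_k[\,\cdot\,]$ are ordered non-increasingly in their index, and to read this off directly from their min-max definitions. I would handle the eigenvalue statement first and then bootstrap the $\nu$-statement from it via Observation \ref{nueig}.

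For the eigenvalues, fix $j<k$ and let $\cV_k^\star$ be a $k$-dimensional subspace attaining the outer maximum in Definition \ref{defeig}, so that $\lambda_k[W]=\min_{\vect v\in\cV_k^\star\setminus0}\vect v^TW\vect v/\vect v^T\vect v$. Because $j\le k$, I may pick some $j$-dimensional subspace $\cV_j\subseteq\cV_k^\star$. The key (and essentially only) observation is that minimizing the Rayleigh quotient over the smaller set $\cV_j\setminus0\subseteq\cV_k^\star\setminus0$ can only raise the minimum, so $\min_{\vect v\in\cV_j\setminus0}\vect v^TW\vect v/\vect v^T\vect v\ge\lambda_k[W]$; taking the maximum over all $j$-dimensional subspaces on the left-hand side then yields $\lambda_j[W]\ge\lambda_k[W]$.

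For $\nu$, the cleanest route is to invoke Observation \ref{nueig}. Applied to the argument $W$ itself, it gives $\nu_k[W]=\lambda_{2k}[W]$ — here one checks that the construction leaves $W$ fixed, since $\Omega W\Omega^T=W$ by $\Omega^2=-\eins$ and $\Omega^T=-\Omega$. Since $j<k$ forces $2j<2k$, the $\lambda$-monotonicity just established gives $\nu_j[W]=\lambda_{2j}[W]\ge\lambda_{2k}[W]=\nu_k[W]$ at once. Alternatively, the identical nesting argument applies verbatim at the level of Definition \ref{defspeig}: choosing a $2j$-dimensional symplectic subspace inside an optimal $2k$-dimensional one, the inner minimum over two-dimensional symplectic subspaces is then taken over a smaller family and hence cannot decrease.

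I expect no serious obstacle, as this is a standard consequence of the Courant--Fischer characterization. The only points demanding care are getting the direction of the inequality right — namely that restricting to a smaller subspace raises the infimum of the Rayleigh quotient — and, if one argues directly for $\nu$, invoking the elementary symplectic-linear-algebra fact that a $2k$-dimensional symplectic subspace contains symplectic subspaces of every smaller even dimension. Routing the $\nu$-case through Observation \ref{nueig} sidesteps the latter entirely, which is why I would adopt it.
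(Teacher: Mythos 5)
Your proof is correct. The paper in fact offers no proof of Observation~\ref{eigpm} at all, treating it as an immediate consequence of the Courant--Fischer-type variational definitions; your nesting argument for the $\lambda$'s and your reduction of the $\nu$-case to the $\lambda$-case via Observation~\ref{nueig} (noting $\Omega W\Omega^T=W$) supply exactly the justification the paper leaves implicit, and both of your routes for the $\nu$'s are sound.
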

\begin{obs}
If $\dim(\cV)=2m$, then $\lambda_k[-W]=-\lambda_{2m+1-k}[W]$ and $\nu_k[-W]=-\nu_{m+1-k}[W]$ for all applicable $k$.

It is straightforward to see why this holds for the $\lambda$'s, considering that they are the eigenvalues of a Hermitian operator in a finite\hyp dimensional vector space. It also holds for the $\nu$'s, since by virtue of Observation~\ref{nueig} they, too, are the eigenvalues of a Hermitian operator.
\end{obs}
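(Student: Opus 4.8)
The statement splits into an elementary claim about the ordinary eigenvalues $\lambda_k$ and a derived claim about the $\nu_k$, and the plan is to settle the $\lambda$-part by spectral reflection and then bootstrap the $\nu$-part from it through Observation~\ref{nueig}. For the $\lambda$'s I would argue purely spectrally: since $W$ is symmetric it has $2m$ real eigenvalues, and those of $-W$ are exactly their negatives (with identical eigenvectors). Listing the eigenvalues of $W$ in decreasing order, negation reverses this order, so the $k^{\textnormal{th}}$-largest eigenvalue of $-W$ is the negative of the $k^{\textnormal{th}}$-smallest of $W$, namely $-\lambda_{2m+1-k}[W]$. Equivalently, one can push the sign through the max--min formula of Definition~\ref{defeig}, turning the $k$-dimensional max--min characterisation of $\lambda_k[-W]$ into the dual min--max characterisation of $\lambda_{2m+1-k}[W]$; either route is routine.

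For the $\nu$'s I would avoid the symplectic min--max of Definition~\ref{defspeig} directly and instead pass through Observation~\ref{nueig}. Write $W_s:=\fr12\left(W+\Omega W\Omega^T\right)$. Observation~\ref{nueig} gives $\nu_k[W]=\lambda_{2k}\left[W_s\right]$, and since $\left(-W\right)_s=-W_s$ it likewise gives $\nu_k[-W]=\lambda_{2k}\left[-W_s\right]$. Applying the $\lambda$-identity just established (on the $2m$-dimensional space, at index $2k$) yields $\lambda_{2k}\left[-W_s\right]=-\lambda_{2m+1-2k}\left[W_s\right]$. The remaining task is purely one of index-matching: the target value is $-\nu_{m+1-k}[W]=-\lambda_{2m+2-2k}\left[W_s\right]$, so I must verify $\lambda_{2m+1-2k}\left[W_s\right]=\lambda_{2m+2-2k}\left[W_s\right]$.

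This last equality is where the content sits, and it is supplied by the double degeneracy of the spectrum of $W_s$ that was already proved inside Observation~\ref{nueig}: via the isomorphism onto $\tilde\cV\cong\bbC^m$, the $2m$ eigenvalues of $W_s$ occur in equal consecutive pairs $\lambda_1[\tilde W],\lambda_1[\tilde W],\lambda_2[\tilde W],\dots$, so that $\lambda_{2j-1}\left[W_s\right]=\lambda_{2j}\left[W_s\right]$ for every $j$. Taking $j=m+1-k$ gives exactly $\lambda_{2m+1-2k}\left[W_s\right]=\lambda_{2m+2-2k}\left[W_s\right]$, closing the chain and proving $\nu_k[-W]=-\nu_{m+1-k}[W]$.

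The main obstacle is not any individual estimate but keeping the two independent factors of two aligned: the doubling built into $\nu_k=\lambda_{2k}$ and the doubling coming from the pairwise degeneracy of $W_s$. It is precisely their interplay that converts the reflection index $m+1-k$ appropriate to the $m$-fold family $\nu$ into the correct reflection index $2m+1-(\cdot)$ for the $2m$-fold family $\lambda$; an off-by-one in either doubling would break the identity, so the verification that the two shifts cancel is the only step requiring care.
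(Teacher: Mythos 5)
Your proof is correct and follows essentially the same route as the paper: both parts reduce to the spectral reflection $\lambda_k[-A]=-\lambda_{n+1-k}[A]$ for a Hermitian operator on an $n$-dimensional space, with the $\nu$-part bootstrapped from Observation~\ref{nueig}. The paper's justification is terser only because it applies the reflection directly to the $m$ eigenvalues of the complex matrix $\tilde W$ (of which the $\nu_k$ are exactly the ordered eigenvalues), which sidesteps the degeneracy and index bookkeeping you carry out for the real $2m\times 2m$ matrix $W_s$; your bookkeeping is nonetheless accurate.
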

\begin{note}
In the remainder, any expression with $\pm$ and/or $\mp$ signs is to be interpreted as a \emph{conjunction} of \emph{exactly two} sub\hyp expressions: the one obtained by consistently applying the top sign throughout, and the other by consistently applying the bottom one. The scope of every such consistent application will be clear from the context.
\end{note}
\begin{definition}[Principal directional temperatures]
For an $m$\hyp mode state $\rho$ with covariance matrix $V_\rho$, we define its \emph{$k^\textnormal{th}$ largest principal directional temperature} (principal temperature for short) $\tau_k(\rho)$, for $k\in\{1,2\dots,2m\}$, as
\be
\tau_k\left(\rho\right)\equiv\tau_k^\downarrow:=\lambda_k\left[V_\rho\right].
\ee
\end{definition}
\begin{definition}[Principal mode temperatures]\label{defspeig}
For an $m$\hyp mode state $\rho$ with covariance matrix $V_\rho$, we define its \emph{$k^\textnormal{th}$ principal mode temperature} (mode temperature for short) $\mu_k(\rho)$, for $k\in\{1,2\dots,m\}$, as
\be
\mu_k\left(\rho\right)\equiv\mu_k^\downarrow:=\nu_k\left[V_\rho\right].
\ee
\end{definition}
\begin{obs}
The principal directional and mode temperatures as defined above are arranged in non\hyp increasing order. It follows from Observation~\ref{eigpm} that the same collections of values, arranged in \emph{non\hyp decreasing} order, are given respectively by
\begin{align}
\tau_k^\uparrow:=-\lambda_k\left[-V_\rho\right],\\
\mu_k^\uparrow:=-\nu_k\left[-V_\rho\right].
\end{align}
\end{obs}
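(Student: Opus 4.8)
The plan is to derive both assertions directly from the two immediately preceding observations, as the statement involves no genuinely new computation. First I would dispose of the non-increasing ordering. By definition $\tau_k^\downarrow = \lambda_k\left[V_\rho\right]$ and $\mu_k^\downarrow = \nu_k\left[V_\rho\right]$, and Observation~\ref{eigpm} asserts precisely that $\lambda_j\left[V_\rho\right]\ge\lambda_k\left[V_\rho\right]$ and $\nu_j\left[V_\rho\right]\ge\nu_k\left[V_\rho\right]$ whenever $j<k$. Hence both indexed families decrease weakly in $k$, which is the first claim.

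For the second claim I would invoke the sign-reflection identities established in the preceding observation, namely $\lambda_k\left[-V_\rho\right]=-\lambda_{2m+1-k}\left[V_\rho\right]$ and $\nu_k\left[-V_\rho\right]=-\nu_{m+1-k}\left[V_\rho\right]$, the latter being legitimate because Observation~\ref{nueig} exhibits the $\nu_k$ as honest eigenvalues of the Hermitian operator $W$. Substituting these into the definitions $\tau_k^\uparrow:=-\lambda_k\left[-V_\rho\right]$ and $\mu_k^\uparrow:=-\nu_k\left[-V_\rho\right]$ yields at once
\[
\tau_k^\uparrow = \lambda_{2m+1-k}\left[V_\rho\right] = \tau_{2m+1-k}^\downarrow,
\qquad
\mu_k^\uparrow = \nu_{m+1-k}\left[V_\rho\right] = \mu_{m+1-k}^\downarrow.
\]
Thus each $\uparrow$-family is literally the corresponding $\downarrow$-family read in reverse index order.

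The two halves of the statement then follow jointly: the reflections $k\mapsto 2m+1-k$ (for the directional temperatures) and $k\mapsto m+1-k$ (for the mode temperatures) are bijections of the respective index sets, so $\left\{\tau_k^\uparrow\right\}_k$ and $\left\{\tau_k^\downarrow\right\}_k$ agree as collections with multiplicity, and likewise for $\mu$; moreover reversing a weakly decreasing sequence produces a weakly increasing one, so the $\uparrow$-families are arranged in non-decreasing order. The only point requiring care—and hence what I would flag as the sole (minor) obstacle—is the index bookkeeping: the directional temperatures carry $2m$ values and reflect through $2m+1-k$, whereas the mode temperatures carry only $m$ values and reflect through $m+1-k$, and one must confirm that the reflection identity applies to the $\nu_k$ at all, which is exactly what the eigenvalue realization of Observation~\ref{nueig} guarantees.
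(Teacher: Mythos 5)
Your proposal is correct and follows essentially the same route the paper intends: the non\hyp increasing ordering is exactly Observation~\ref{eigpm}, and the identification of the $\uparrow$\hyp families as the reversed $\downarrow$\hyp families is the sign\hyp reflection identity $\lambda_k\left[-V_\rho\right]=-\lambda_{2m+1-k}\left[V_\rho\right]$ (and its $\nu$ analogue, justified via Observation~\ref{nueig}), which is precisely the content of the paper's intervening observation. Your explicit index bookkeeping ($2m+1-k$ versus $m+1-k$) is a correct and welcome elaboration of what the paper leaves implicit.
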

Based on the above observations, we now reproduce theorems \ref{theigm} and \ref{thspeigm} of the main text formally in terms of the $\lambda$'s and $\nu$'s:
\begin{athm}[Theorems \ref{theigm} and \ref{thspeigm} of main text]\label{theig}
For a given $m$\hyp mode state $\rho$ and $m'$\hyp mode state $\sigma$ (Fig.~\ref{figBLTO}), denote the corresponding covariance matrices as $\left(V_\rho,V_\sigma\right)$, and define
\begin{align}
k^\pm_\rho&:=\left|\left\{k:\lambda_k\left[\pm V_\rho\right]>\pm\eta\right\}\right|;\nonumber\\
k^\pm_\sigma&:=\left|\left\{k:\lambda_k\left[\pm V_\sigma\right]>\pm\eta\right\}\right|;\nonumber\\
k^{\mathrm{Sp}\pm}_\rho&:=\left|\left\{k:\nu_k\left[\pm V_\rho\right]>\pm\eta\right\}\right|;\nonumber\\
k^{\mathrm{Sp}\pm}_\sigma&:=\left|\left\{k:\nu_k\left[\pm V_\sigma\right]>\pm\eta\right\}\right|.
\end{align}
Then, $\rho\gt\sigma$ only if
\begin{enumerate}
\item $k^\pm_\rho\ge k^\pm_\sigma$ and $k^{\mathrm{Sp}\pm}_\rho\ge k^{\mathrm{Sp}\pm}_\sigma$; and, furthermore,\label{cond1}
\item $\lambda_k\left[\pm V_\rho\right]\ge\lambda_k\left[\pm V_\sigma\right]$ for all $k\le k^\pm_\sigma$, and $\nu_k\left[\pm V_\rho\right]\ge\nu_k\left[\pm V_\sigma\right]$ for all $k\le k^{\mathrm{Sp}\pm}_\sigma$.\label{cond2}
\end{enumerate}
\end{athm}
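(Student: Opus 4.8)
The plan is to begin from the mathematical form of a BLTO already extracted above, namely Eq.~\eqref{SBLTO1}. Writing $\tilde V:=V_\rho\oplus\eta\eins_\bA$, this reads $V_\sigma=\Pi_{\cV'}M\tilde V M^T\Pi_{\cV'}$, with $M$ orthogonal\hyp symplectic and $\Pi_{\cV'}$ the projector onto the symplectic subspace $\cV'$ of output modes. Two structural facts drive the whole argument: (i) conjugation by the orthogonal $M$ leaves the spectrum unchanged, so $\lambda_k[M\tilde V M^T]=\lambda_k[\tilde V]$; and (ii) in the output coordinate basis $V_\sigma$ is a principal $2m'\times 2m'$ submatrix of $M\tilde V M^T$, so the Cauchy interlacing theorem gives $\lambda_k[V_\sigma]\le\lambda_k[M\tilde V M^T]$ for every $k$.

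For the directional temperatures (the $\lambda$ part) I would treat the super\hyp and sub\hyp thermal cases together via the $\pm$ convention, in which $V$ is replaced by $\pm V$ and the threshold $\eta$ by $\pm\eta$. The crucial observation is that the ancilla contributes eigenvalues equal to \emph{exactly} $\pm\eta$ to $\pm\tilde V=\pm V_\rho\oplus(\pm\eta)\eins_\bA$, i.e.\ pinned precisely at the threshold. Hence $\pm\tilde V$ has exactly $k^\pm_\rho$ eigenvalues strictly exceeding $\pm\eta$, and these are precisely $\lambda_1[\pm V_\rho]\ge\cdots\ge\lambda_{k^\pm_\rho}[\pm V_\rho]$. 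Combining (i), (ii), and this fact: if $\lambda_k[\pm V_\sigma]>\pm\eta$ then $\lambda_k[\pm\tilde V]\ge\lambda_k[\pm V_\sigma]>\pm\eta$, forcing $k\le k^\pm_\rho$; this yields $k^\pm_\sigma\le k^\pm_\rho$ (Condition~\ref{cond1}). For $k\le k^\pm_\sigma\le k^\pm_\rho$ one has $\lambda_k[\pm\tilde V]=\lambda_k[\pm V_\rho]$, whence $\lambda_k[\pm V_\sigma]\le\lambda_k[\pm\tilde V]=\lambda_k[\pm V_\rho]$ (Condition~\ref{cond2}).

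The mode\hyp temperature ($\nu$) part is where a genuinely new ingredient is needed, and the main obstacle is to produce a compression structure for $\nu$ mirroring the one for $\lambda$. My plan is to show that the mode\hyp symmetrization $W[\,\cdot\,]:=\tfrac12\left(\,\cdot\,+\Omega(\,\cdot\,)\Omega^T\right)$ commutes both with orthogonal\hyp symplectic conjugation and with the symplectic projection onto $\cV'$. Both rest on commutation with $\Omega$: an orthogonal\hyp symplectic $M$ satisfies $M\Omega=\Omega M$, and because $\cV'$ is a symplectic subspace aligned with whole modes, $\Pi_{\cV'}$ commutes with $\Omega$ as well. From these one computes $W[M\tilde V M^T]=M\,W[\tilde V]\,M^T$ and $W_{\cV'}[\Pi_{\cV'}X\Pi_{\cV'}]=\Pi_{\cV'}W[X]\Pi_{\cV'}$, together with $W[\tilde V]=W[V_\rho]\oplus\eta\eins_\bA$. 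Chaining these gives $W_\sigma=\Pi_{\cV'}M\,(W_\rho\oplus\eta\eins_\bA)\,M^T\Pi_{\cV'}$ — exactly the compression structure of the $\lambda$ case, but with $W_\rho$ in place of $V_\rho$.

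With this in hand I would finish using Observation~\ref{nueig}, that $\nu_k[V]=\lambda_{2k}[W[V]]$ and the spectrum of $W[V]$ is doubly degenerate. Thus both $U:=W_\rho\oplus\eta\eins_\bA$ and $W_\sigma$ have doubly\hyp degenerate spectra; the super\hyp thermal eigenvalues of $U$ are exactly the doubled super\hyp thermal $\nu_k[V_\rho]$ (the ancilla again supplying eigenvalues $\eta$ exactly at threshold); and interlacing gives $\lambda_j[W_\sigma]\le\lambda_j[U]$. Counting the doubly\hyp degenerate super\hyp thermal eigenvalues yields $2k^{\mathrm{Sp}+}_\sigma\le 2k^{\mathrm{Sp}+}_\rho$, hence $k^{\mathrm{Sp}+}_\sigma\le k^{\mathrm{Sp}+}_\rho$; and for $k\le k^{\mathrm{Sp}+}_\sigma$ one gets $\nu_k[V_\sigma]=\lambda_{2k}[W_\sigma]\le\lambda_{2k}[U]=\nu_k[V_\rho]$. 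The sub\hyp thermal case follows by replacing $V$ with $-V$ throughout, using $W[-V]=-W[V]$ and the ancilla now contributing eigenvalues $-\eta$ at the relevant threshold. The one point demanding care is the factor\hyp of\hyp two bookkeeping in the degenerate counting, which is precisely why keeping the ancilla eigenvalues pinned exactly at the threshold (rather than strictly inside it) makes the accounting clean.
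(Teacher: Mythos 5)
Your proposal is correct, and for the directional temperatures ($\lambda$'s) it is essentially the paper's argument in different clothing: the paper's chain of max--min inequalities over subspaces is just the variational form of the orthogonal-invariance-plus-Cauchy-interlacing step you invoke, and both proofs then finish with the same observation that the ancilla contributes eigenvalues pinned exactly at the threshold $\pm\eta$. Where you genuinely diverge is the mode-temperature ($\nu$) part. The paper stays with the variational characterization $\nu_k[\pm V_\sigma]=\tfrac12\max_{\cV_{2k}\scong\cV'}\min_{\cV_2\scong\cV_{2k}}\Tr[\pm\Pi_{\cV_2}V_\sigma]$ and bounds it by the corresponding quantity for $\pm V_\rho\oplus\eta\eins_\bA$, using the fact that symplectic subspaces of $\cV'$ pull back under the orthogonal--symplectic $M$ to symplectic subspaces of $\cV\oplus\cV_\bA$, so the larger maximization subsumes the smaller; Observation~\ref{nueig} is used there only to see that the $\nu$'s behave like eigenvalues (ordering, sign flip). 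You instead prove a structural lemma the paper never states --- that the symmetrization $W[\cdot]=\tfrac12(\cdot+\Omega\cdot\Omega^T)$ commutes with orthogonal--symplectic conjugation (via $M\Omega=\Omega M$) and with the mode-aligned projector $\Pi_{\cV'}$ --- so that $W_\sigma=\Pi_{\cV'}M(W_\rho\oplus\eta\eins_\bA)M^T\Pi_{\cV'}$, and then the $\nu$ case collapses into the $\lambda$ case for the doubly degenerate matrix $W$. Your route buys a clean reduction of everything to ordinary Hermitian interlacing and makes the ``compression structure'' of a BLTO at the level of $W$ explicit, at the modest cost of verifying the commutation identities and the factor-of-two bookkeeping; the paper's route avoids those identities but asks the reader to accept the subsumption step over symplectic subspaces. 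Both correctly handle the sub-thermal case by the $V\mapsto -V$ trick, and both (rightly) only need the forward implication, the paper's additional converse being about the equivalence of its derived inequality set with the stated conditions rather than about sufficiency for reachability.
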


\begin{proof}
We will go through the proof for the $\nu$'s, which require relatively more careful treatment; we omit the proof for the $\lambda$'s, which proceeds on similar lines but more straightforwardly. Recall that $V_\rho$ and $V_\sigma$ are symmetric positive\hyp semidefinite matrices acting on the respective phase spaces of $\bS$ and $\bS'$, viz.\ $\left(\cV,\Omega\right)\equiv\left(\cV_\bS,\Omega_\bS\right)\cong\left(\bbR^{2m},\Omega_{2m}\right)$ and $\left(\cV',\Omega'\right)\equiv\left(\cV_{\bS'},\Omega_{\bS'}\right)\cong\left(\bbR^{2m'},\Omega_{2m'}\right)$ respectively. Eq.~\eqref{SBLTO1} tells us that $\rho\gt\sigma$ only if there is an orthogonal, symplectic $M$ (acting globally on the symplectic space $\cV\oplus\cV_{\bA}$, where $\bA$ is an ancilla consisting of an arbitrary number $m_\bA\in\bbN$ of modes) such that
\begin{equation}\label{GTW}
    V_\sigma=\Pi_{\cV'}M\left(V_\rho\oplus\eta\eins_\bA\right)M^T\Pi_{\cV'},
\end{equation}
where $\Pi_{\cV'}$ effects an orthogonal projection onto the phase space $\cV'$ of $\bS'$, a symplectic subspace of $\cV\oplus\cV_{\bA}$. Now, for $1\le k\le m'$,
\begin{widetext}
\begin{align}
    \nu_k\left[\pm V_\sigma\right]:=&\fr12\max_{\cV_{2k}\scong\cV'}\min_{\cV_2\scong\cV_{2k}}\Tr\left[\pm\Pi_{\cV_2}V_\sigma\right]\nonumber\\
    =&\fr12\max_{\cV_{2k}\scong\cV'}\min_{\cV_2\scong\cV_{2k}}\Tr\left[\pm\Pi_{\cV_2}\Pi_{\cV'}M\left(V_\rho\oplus\eta\eins_\bA\right)M^T\Pi_{\cV'}\right]\nonumber\\
    \le&\fr12\max_{\cV_{2k}\scong\cV\oplus\cV_{\bA}}\min_{\cV_2\scong\cV_{2k}}\Tr\left[\pm\Pi_{\cV_2}\left(V_\rho\oplus\eta\eins_\bA\right)\right]=\nu_k\left[\pm V_\rho\oplus\eta\eins_\bA\right].\label{VsWr}
\end{align}
\end{widetext}
The second line follows from \eqref{GTW}, and the last line from the fact that the maximization therein subsumes the cases covered by that in the line before. We will now prove that the inequalities \eqref{VsWr} for $1\le k\le m'$ are collectively equivalent to the conjunction of (the symplectic parts of) conditions \ref{cond1} and \ref{cond2} in the statement of Theorem \ref{theig}.

We shall first prove that the former implies the latter. Firstly, it follows from the definition of $k^{\mathrm{Sp}\pm}_\sigma$ that, for $1\le k\le k^{\mathrm{Sp}\pm}_\sigma$,
\be
\nu_k\left[\pm V_\sigma\right]>\pm\eta.
\ee
Meanwhile, for $k>k^{\mathrm{Sp}\pm}_\rho$,
\be
\nu_k\left[\pm V_\rho\oplus\eta\eins_\bA\right]\le\pm\eta.
\ee
This necessitates $k^{\mathrm{Sp}\pm}_\rho\ge k^{\mathrm{Sp}\pm}_\sigma$, i.e.\ condition \ref{cond1}. Provided this holds, we have for $k\le k^{\mathrm{Sp}\pm}_\sigma$ that
\be
\nu_k\left[\pm V_\rho\oplus\eta\eins_\bA\right]=\nu_k\left[\pm V_\rho\right].
\ee
This establishes that inequality~\eqref{VsWr} for $1\le k\le m'$ implies conditions \ref{cond1} and \ref{cond2}. For the converse, suppose \ref{cond1} and \ref{cond2} hold. For $k\le k^{\mathrm{Sp}\pm}_\rho$,
\be
\nu_k\left[\pm V_\rho\oplus\eta\eins_\bA\right]=\nu_k\left[\pm V_\rho\right]>\pm\eta.
\ee
by the definition of $k^{\mathrm{Sp}\pm}_\rho$, thus securing \eqref{VsWr} by virtue of condition~\ref{cond2}. On the other hand, for $k>k^{\mathrm{Sp}\pm}_\rho$, condition~\ref{cond1} implies that $k>k^{\mathrm{Sp}\pm}_\sigma$, so that
\be
\nu_k\left[\pm V_\sigma\right]\le\pm\eta.
\ee
For \eqref{VsWr} to hold, we require this quantity to be bounded above by $\nu_k\left[\pm V_\rho\oplus\eta\eins_\bA\right]$ for some $\bA$ consisting of an arbitrary number of modes. We can achieve this by making the dimensionality of the phase space of $\bA$ larger than $2\left(m'-k^{\mathrm{Sp}\pm}_\rho\right)$, so that for $m'\ge k>k^{\mathrm{Sp}\pm}_\rho$, $\nu_k\left[\pm V_\rho\oplus\eta\eins_\bA\right]=\pm\eta$.
\end{proof}
\subsection{Proof of Theorem~\ref{thspm}}\label{prthsp}
Recall Eq.~\eqref{SBLTO1} relating the input and output covariance matrix under a BLTO:
\be\label{SBLTO2}
V_\sigma=\Pi_{\cV'}M\left(V_\rho\oplus\eta\eins_\bA\right)M^T\Pi_{\cV'},
\ee
where $M$ is some orthogonal symplectic matrix. Let $\tilde V:=M\left(V_\rho\oplus\eta\eins_\bA\right)M^T$. Since $M$ is symplectic, the symplectic spectrum of $\tilde V$ is identical to that of $V_\rho\oplus\eta\eins_\bA$. Let $\eta_1\left[V_\rho\right]\le\eta_2\left[V_\rho\right]\dots\le\eta_m\left[V_\rho\right]$ denote the symplectic eigenvalues of $V_\rho$ in \emph{non\hyp decreasing} order. Define
\be
k_\rho:=\left|\left\{j:\eta_j\left[V_\rho\right]<\eta\right\}\right|,
\ee
i.e., the number of sub\hyp thermal symplectic eigenvalues of $V_\rho$. The symplectic spectrum of $V_\rho\oplus\eta\eins_\bA$\m and, therefore, that of $\tilde V$\m is then given by
\begin{widetext}
\be
\left(\eta_1\left[\tilde V\right],\eta_2\left[\tilde V\right]\dots,\eta_{m+m_\bA}\left[\tilde V\right]\right)=\left(\eta_1\left[V_\rho\right],\eta_2\left[V_\rho\right]\dots,\eta_{k_\rho}\left[V_\rho\right],\eta,\eta\dots,\eta,\eta_{k_\rho+1}\left[V_\rho\right]\dots,\eta_m\left[V_\rho\right]\right),
\ee
\end{widetext}
with $\eta$ appearing $m_\bA$ times on the RHS. Since $V_\sigma$ is obtained from $\tilde V$ by simply removing all rows and columns other than those associated with $\bS'$, the symplectic eigenvalues of $V_\sigma$ and those of $\tilde V$ are related by the interlacing condition \cite{BJ15}
\be
\eta_j\left[V_\sigma\right]\ge\eta_j\left[\tilde V\right].
\ee
But for $j\le k_\rho$, $\eta_j\left[\tilde V\right]=\eta_j\left[V_\rho\right]$. Theorem~\ref{thspm} follows.\qed

\subsection{Proof of Theorem~\ref{thsnrm}}\label{prsnr}
Once again, a mathematical translation of definitions \ref{defsnr} and \ref{defmsnr} will help us prove this theorem.
\begin{definition}[definitions \ref{defsnr} and \ref{defmsnr} of main text]
For an $m$\hyp mode state $\rho$ with covariance matrix $V_\rho$, we define its \emph{$k^\textnormal{th}$ largest principal directional SNR} for $k\in\{1,2\dots,2m\}$ as
\be
\textnormal{SNR}_k\left(\rho\right):=\sqrt{\min_{\cV_\ell}\max_{\vect v\in\cV_\ell\setminus0}\fr{\vect v^T\left\langle\hat{\vect x}\right\rangle_\rho\left\langle\hat{\vect x}\right\rangle_\rho^T\vect v}{\vect v^TV_\rho\vect v}},
\ee
and its \emph{$k^\textnormal{th}$ largest principal mode SNR} for $k\in\{1,2\dots,m\}$ as
\be
\textnormal{MSNR}_k(\rho)=\sqrt{\min_{\cV_{2\ell}\scong\cV}\max_{\vect v\in\cV_{2\ell}\setminus0}\fr{\vect v^T\left\langle\hat{\vect x}\right\rangle_\rho\left\langle\hat{\vect x}\right\rangle_\rho^T\vect v}{\vect v^TV_\rho\vect v}}.
\ee
\end{definition}
Note that $\textbf{SNR}\left(\rho\otimes\gamma_\bA\right)=\textbf{SNR}\left(\rho\right)\oplus0_\bA$ and likewise for the MSNR's. The proof of theorem~\ref{thsnrm} then follows in a straightforward manner along the same lines as the previous proof.

Here we find it opportune to note a simple way to compute the principal directional SNR's:
\begin{obs}\label{Rval}
For an $m$\hyp mode state $\rho$ with first moments and covariance matrix given by $\left(\left\langle\hat{\vect x}\right\rangle_\rho,V_\rho\right)$, define
\be\label{defR}
R_\rho:=V_\rho^{-1/2}\left\langle\hat{\vect x}\right\rangle_\rho\left\langle\hat{\vect x}\right\rangle_\rho^TV_\rho^{-1/2}.
\ee
Then, for $k\in\left\{1,2\dots,2m\right\}$,
\be\label{eqeig}
\textnormal{SNR}_k(\rho)=\sqrt{\lambda_k\left[R_\rho\right]}.
\ee
\end{obs}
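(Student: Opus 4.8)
The plan is to recognize the expression defining $\textnormal{SNR}_k(\rho)$ as a \emph{generalized} Rayleigh quotient for the matrix pencil $\left(\langle\hat{\vect x}\rangle_\rho\langle\hat{\vect x}\rangle_\rho^T,\,V_\rho\right)$ and to reduce it to an ordinary Rayleigh quotient by a congruence transformation. Since any physical covariance matrix satisfies the uncertainty constraint $V_\rho+i\Omega\ge0$, its symplectic eigenvalues are bounded below by the vacuum value, so $V_\rho$ is strictly positive definite and the symmetric square root $V_\rho^{1/2}$ (and its inverse $V_\rho^{-1/2}$) exist. First I would substitute $\vect w=V_\rho^{1/2}\vect v$, i.e.\ $\vect v=V_\rho^{-1/2}\vect w$, which turns the ratio
\be
\fr{\vect v^T\langle\hat{\vect x}\rangle_\rho\langle\hat{\vect x}\rangle_\rho^T\vect v}{\vect v^TV_\rho\vect v}
\ee
into $\vect w^TR_\rho\vect w/\vect w^T\vect w$ with $R_\rho$ as defined in \eqref{defR}. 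Thus the generalized quotient in direction $\vect v$ equals the ordinary Rayleigh quotient of the \emph{symmetric} matrix $R_\rho$ in the image direction $\vect w$.

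Next I would push this pointwise identity through the nested optimization. Because $V_\rho^{1/2}$ is invertible, the map $\vect v\mapsto V_\rho^{1/2}\vect v$ is a linear bijection of $\cV$ that carries $\ell$\hyp dimensional subspaces onto $\ell$\hyp dimensional subspaces, and does so surjectively onto the set of all such subspaces. Hence substituting the optimization domain $\cV_\ell$ for $\vect v$ by its image for $\vect w$ leaves both the inner maximum and the outer minimum unchanged, giving
\be
\min_{\cV_\ell}\max_{\vect v\in\cV_\ell\setminus0}\fr{\vect v^T\langle\hat{\vect x}\rangle_\rho\langle\hat{\vect x}\rangle_\rho^T\vect v}{\vect v^TV_\rho\vect v}=\min_{\cV_\ell}\max_{\vect w\in\cV_\ell\setminus0}\fr{\vect w^TR_\rho\vect w}{\vect w^T\vect w}.
\ee
The right\hyp hand side is exactly the \emph{min\n max} (Courant\n Fischer) characterization of an eigenvalue of $R_\rho$: minimizing the Rayleigh quotient's maximum over all $\ell$\hyp dimensional subspaces returns $\lambda_{2m-\ell+1}[R_\rho]$. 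With the indexing convention $\ell=2m-k+1$ implicit in Definition~\ref{defsnr} (so that $\textnormal{SNR}_1$ optimizes over the whole phase space and each subsequent value is restricted to the complement of the previous optimal direction), this is precisely $\lambda_k[R_\rho]$, and taking the square root yields \eqref{eqeig}.

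The only genuinely delicate points are bookkeeping rather than substance. The first is confirming that $V_\rho$ is invertible so that the congruence is well defined; this follows from the uncertainty relation, and one should note that the statement is understood for states with full\hyp rank covariance. The second is aligning the dual forms of the Courant\n Fischer theorem: the eigenvalue definition in Definition~\ref{defeig} is written in the \emph{max\n min} form, whereas $\textnormal{SNR}_k$ is written in the \emph{min\n max} form, so one must verify that the latter over $\ell=2m-k+1$ dimensional subspaces reproduces the $k^{\textnormal{th}}$ largest eigenvalue. As a consistency check, note that $\langle\hat{\vect x}\rangle_\rho\langle\hat{\vect x}\rangle_\rho^T$, and hence $R_\rho$, has rank at most one, so $\lambda_1[R_\rho]=\langle\hat{\vect x}\rangle_\rho^TV_\rho^{-1}\langle\hat{\vect x}\rangle_\rho$ while $\lambda_k[R_\rho]=0$ for $k\ge2$\m matching the physical expectation that a single\hyp vector signal admits exactly one nonzero principal SNR.
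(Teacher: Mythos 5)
Your proposal is correct and follows essentially the same route as the paper's proof: the substitution $\vect w=V_\rho^{1/2}\vect v$ (justified by strict positive definiteness of $V_\rho$ from the uncertainty relation) converts the generalized Rayleigh quotient into the ordinary one for $R_\rho$, and the Courant\n Fischer min\n max characterization with $\ell=2m-k+1$ yields $\lambda_k[R_\rho]$. Your explicit check that the min\n max form over $\ell$\hyp dimensional subspaces dualizes to the max\n min form of Definition~\ref{defeig}, and the rank\hyp one sanity check, are welcome details the paper leaves implicit.
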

\begin{proof}
Let us consider the LHS, with the shorthand $\ell:=2m-k+1$:
\begin{align}
\left|\textnormal{SNR}_k(\rho)\right|^2=&\min_{\cV_\ell}\max_{\vect v\in\cV_\ell\setminus0}\fr{\vect v^T\left\langle\hat{\vect x}\right\rangle_\rho\left\langle\hat{\vect x}\right\rangle_\rho^T\vect v}{\vect v^TV_\rho\vect v}\nonumber\\
=&\min_{\Pi:\Pi^T\Pi=\eins_\ell}\max_{\vect v\in\cV:\Pi\vect v\ne0}\fr{\vect v^T\Pi^T\left\langle\hat{\vect x}\right\rangle_\rho\left\langle\hat{\vect x}\right\rangle_\rho^T\Pi\vect v}{\vect v^T\Pi^T V_\rho\Pi\vect v}.
\end{align}
The strict positive\hyp definiteness of $V_\rho$ (by the uncertainty principle) ensures that $\vect u\equiv V_\rho^{1/2}\Pi\vect v$ is nonzero whenever $\Pi\vect v$ is, and vice\hyp versa; it also ensures that for every $\ell$\hyp dimensional subspace $\cV_\ell$ of $\cV$, there exists a $\Pi$ such that $\mathrm{span}\left(V_\rho^{1/2}\Pi\cV\right)=\cV_\ell$. Thus,
\begin{align}
\left|\textnormal{SNR}_k(\rho)\right|^2=&\min_{\cV_\ell}\max_{\vect u\in\cV_\ell\setminus0}\fr{\vect u^TR_\rho\vect u}{\vect u^T\vect u}\nonumber\\
=&\lambda_k\left[R_\rho\right].
\end{align}
\end{proof}
Note that this interpretation as the eigenvalues of some operator fails for the mode SNR's since the latter's definition lacks the symplectic symmetry enjoyed by the definition of the mode temperatures.

\section{Details of illustrative examples}\label{appex}
Fig.~\ref{figex1} illustrated the application of our results to various special cases of state transformations under BLTO. Each subplot in the figure is associated with a particular initial state. The ones in the top half have single\hyp mode initial states, which can consequently be represented on the plot itself. However, this is not possible for the bottom subfigures, whose initial states are on two modes. For completeness, we provide below the details of all six examples used in the figure:
\begin{itemize}
\item Top left: Single\hyp mode state with principal variances $(3,1)$.
\item Top centre: Single\hyp mode state with principal variances $\left(3,4/3\right)$.
\item Top right: Single\hyp mode state with principal variances $(3,2.5)$.
\item Bottom left: Two\hyp mode state with covariance matrix $\left(\begin{array}{cccc}4&0&3.7&0\\0&4&0&-3.7\\3.7&0&4&0\\0&-3.7&0&4\end{array}\right)$.
\item Bottom centre: Two\hyp mode state with covariance matrix $\left(\begin{array}{cccc}4&0&1.6&0\\0&4&0&-1.6\\1.6&0&4&0\\0&-1.6&0&4\end{array}\right)$.
\item Bottom right: Two\hyp mode state with covariance matrix $\left(\begin{array}{cccc}4&0&1.73&0\\0&4&0&-1.73\\1.73&0&2.4&0\\0&-1.73&0&4\end{array}\right)$.
\end{itemize}

\end{document}